\tikzset{
  circ/.style = {circle,draw,fill,inner sep=1.3pt}
}  
\newcommand{\calC}{\mathcal{C}}
\newcommand{\calF}{\mathcal{F}}
\DeclareMathOperator{\dist}{dist}
\newcommand{\yes}{\textsc{Yes}}
\newcommand{\no}{\textsc{No}}
\newtheorem{theorem}{Theorem}
\newtheorem{lemma}{Lemma}
\newtheorem{claim}{Claim}[section]
\newtheorem{observation}{Observation}[section]
\newtheorem{reduction rule}{Reduction Rule}[section]
\newtheorem{marking-scheme}{Marking Scheme}[section]
\newcommand{\mdfull}{\textsc{Metric Dimension}}
\newcommand{\sign}{\texttt{sign}}
\newcommand{\OO}{\mathcal{O}}
\newenvironment{claimproof}{\begin{proof}\renewcommand{\qedsymbol}{\claimqed}}{\end{proof}\renewcommand{\qedsymbol}{\plainqed}}
\let\plainqed\qedsymbol
\begin{document}

\title{Metric Dimension Parameterized by Feedback Vertex Set and Other Structural Parameters\thanks{An extended abstract of this paper was presented in~\cite{MFCSversion}.}~\thanks{Research supported by the European Research Council (ERC) consolidator grant No.~725978 SYSTEMATICGRAPH.}}
\date{}
\author[1]{Esther Galby}
\author[2]{Liana Khazaliya}
\author[1]{Fionn {Mc Inerney}}
\author[3]{Roohani Sharma}
\author[4]{Prafullkumar Tale}
\affil[1]{CISPA Helmholtz Center for Information Security, Saarbr\"ucken, Germany}
\affil[2]{Saint Petersburg State University, Saint Petersburg, Russia}
\affil[3]{Max Planck Institute for Informatics, Saarland Informatics Campus, Saarbr\"{u}cken, Germany}
\affil[4]{Indian Institute of Science Education and Research Pune, Pune, India}

\maketitle

\begin{abstract}
For a graph $G$, a subset $S \subseteq V(G)$ is called a \emph{resolving set} if for any two vertices $u,v \in V(G)$, there exists a vertex $w \in S$ such that $d(w,u) \neq d(w,v)$.
The {\sc Metric Dimension} problem takes as input a graph $G$ and a positive integer $k$, and asks whether there exists a resolving set of size at most $k$.
This problem was introduced in the 1970s and is known to be \NP-hard~[GT~61 in Garey and Johnson's book].
In the realm of parameterized complexity, Hartung and Nichterlein~[CCC~2013] proved that the problem is \W[2]-hard when parameterized by the natural parameter $k$.
They also observed that it is \FPT\ when parameterized by the vertex cover number and asked about its complexity under \emph{smaller} parameters, in particular the feedback vertex set number.
We answer this question by proving that {\sc Metric Dimension} is \W[1]-hard when parameterized by the combined parameter feedback vertex set number plus pathwidth.
This also improves the result of Bonnet and Purohit~[IPEC 2019] which states that the problem is \W[1]-hard parameterized by the pathwidth.
On the positive side, we show that {\sc Metric Dimension} is \FPT\ when parameterized by either the distance to cluster or the distance to co-cluster, both of which are smaller parameters than the vertex cover number.
\end{abstract}

\section{Introduction}

Problems dealing with distinguishing the vertices of a graph have attracted a lot of attention over the years, with the metric dimension problem being a classic one that has been vastly studied since its introduction in the 1970s by Slater~\cite{Slater75}, and independently by Harary and Melter~\cite{HM76}.    
Formally, given a graph $G$ and an integer $k\geq 1$, the {\sc Metric Dimension} problem asks whether there exists a subset $S\subseteq V(G)$ of vertices of $G$ of size at most $k$ such that, for any two vertices $u,v\in V(G)$, there exists a vertex $w\in S$ such that $d(w,u)\neq d(w,v)$.
If such a subset $S\subseteq V(G)$ exists, it is called a {\it resolving set}.
The size of a smallest resolving set of a graph $G$ is the {\it metric dimension} of $G$, and is denoted by $MD(G)$.

There are many variants and problems associated to the metric dimension, with {\it identifying codes}~\cite{KarpovskyCL98}, {\it adaptive identifying codes}~\cite{BGLM08}, and {\it locating dominating sets}~\cite{Slater87} asking for the vertices to be distinguished by their neighborhoods in the subset chosen. 
Other variants of note are the {\it $k$-metric dimension}, where each pair of vertices must be resolved by $k$ vertices in $S\subseteq V(G)$ instead of just one~\cite{ERY15}, and the {\it truncated metric dimension}, where the distance metric is the minimum of the distance in the graph and some integer $k$~\cite{TFL21}. 
Along similar lines, in the {\it centroidal dimension} problem, each vertex must be distinguished by its relative distances to the vertices in $S\subseteq V(G)$~\cite{FoucaudKS14}.
The metric dimension has also been considered in digraphs, with Bensmail et al.~\cite{BMNjournal} providing a summary of the related work in this area.
Interestingly, there are many game-theoretic variants of the metric dimension, such as {\it sequential metric dimension}~\cite{BMM+19}, the {\it localization game}~\cite{BGG18b,HaslegraveJK18}, and the {\it centroidal localization game}~\cite{BGG18a}. The metric dimension and its variants have been studied for both their theoretical interest and their numerous applications such as in network verification~\cite{BEE+06}, fault-detection in networks~\cite{UTS04}, pattern recognition and image processing~\cite{MT84}, graph isomorphism testing~\cite{B80}, chemistry~\cite{CEJ00,J93}, and genomics~\cite{TL19}. For more on these variants and others, see~\cite{KY21} for the latest survey.

Much of the related work around the metric dimension problem focuses on its computational complexity.
{\sc Metric Dimension} was first shown to be \NP-complete in general graphs in~\cite{GJ79}. 
Later, it was also shown to be \NP-complete in split graphs, bipartite graphs, co-bipartite graphs, and line graphs of bipartite graphs in~\cite{ELW15}, in bounded-degree planar graphs~\cite{DiazPSL17}, and interval and permutation graphs of diameter~$2$~\cite{FoucaudMNPV17b}. 
On the positive side, there are linear-time algorithms for {\sc Metric Dimension} in trees~\cite{Slater75}, cographs~\cite{ELW15}, cactus block graphs~\cite{HEW16}, and bipartite distance-hereditary graphs~\cite{Moscarini22}, and a polynomial-time algorithm for outerplanar graphs~\cite{DiazPSL17}. 

Since the problem is \NP-hard even for very restricted cases, it is natural to ask for ways to confront this hardness.
In this direction, the parameterized complexity paradigm allows for a more refined analysis of the problem’s complexity.
In this setting, we associate each instance $I$ with a parameter $\ell$, and are interested in an algorithm with running time $f(\ell) \cdot |I|^{\OO(1)}$ for some computable function $f$.
Parameterized problems that admit such an algorithm are called fixed parameter tractable (\FPT) with respect to the parameter under consideration.
On the other hand, under standard complexity assumptions, parameterized problems that are hard for the complexity class \W[1] or \W[2] do not admit such fixed-parameter algorithms.
A parameter may originate from the formulation of the problem itself (called \emph{natural parameters}) or it can be a property of the input graph (called \emph{structural parameters}).

Hartung and Nichterlein~\cite{HartungN13} proved that {\sc Metric Dimension} is \W[2]-hard when parameterized by the natural parameter, the solution size $k$, even when the input graph is bipartite and has maximum degree $3$.
This motivated the study of the parameterized complexity of the problem under structural parameterizations.
It was observed in~\cite{HartungN13} that the problem admits a simple \FPT\ algorithm when parameterized by the vertex cover number.
Gutin et al.~\cite{DBLP:journals/tcs/GutinRRW20} proved that the problem does not admit a polynomial kernel when parameterized by the vertex cover number unless $\NP \subseteq \coNP/poly$.
It took a considerable amount of work and/or meta-results to prove that there are \FPT~algorithms parameterized by the max leaf number~\cite{E15}, the modular width and the treelength plus the maximum degree~\cite{BelmonteFGR17}, and the treedepth and the combined parameter clique-width plus diameter~\cite{GHK22}. 
In~\cite{ELW15}, they gave an \XP~algorithm parameterized by the feedback edge set number.  
Only recently, it was shown that {\sc Metric Dimension} is \W[1]-hard parameterized by the pathwidth (and so, treewidth)~\cite{BP21}, answering an open question mentioned in~\cite{BelmonteFGR17, DiazPSL17,E15}. 
This result was improved upon since, with it being shown that {\sc Metric Dimension} is even \NP-hard in graphs of pathwidth~$24$~\cite{LM22}. 
For more on the metric dimension, see~\cite{TFL21survey} for a recent survey.

\begin{figure}[t]%
\footnotesize{
\definecolor{green}{RGB}{97, 191, 135}
\definecolor{red}{RGB}{226, 93, 66}
\definecolor{yellow}{RGB}{243, 208, 53}
\newcommand{\circgreen}{\raisebox{0.5pt}{\tikz{\node[draw,scale=0.4,circle,fill=yellow](){};}}}
\newcommand{\circred}{\raisebox{0.5pt}{\tikz{\node[draw,scale=0.4,circle,fill=yellow](){};}}}
\newcommand{\circyellow}{\raisebox{0.5pt}{\tikz{\node[draw,scale=0.4,circle,fill=yellow](){};}}}
\tikzstyle{open}=[align=center, rectangle, minimum height=.8cm,text width=2cm,fill=white!60,rounded corners=2mm,draw]
\tikzstyle{para}=[align=center, rectangle, minimum height=.8cm,text width=2cm,fill=red,rounded corners=2mm, draw]
\tikzstyle{w1}=[align=center, rectangle, minimum height=.8cm,text width=2cm,fill=yellow,rounded corners=2mm,draw]
\tikzstyle{fpt}=[align=center, rectangle, minimum height=.8cm,text width=2cm,fill=green,rounded corners=2mm,draw]
\tikzstyle{xp}=[align=center, rectangle, minimum height=.8cm,text width=2cm,fill=cyan,rounded corners=2mm,draw]
\resizebox{\textwidth}{!}{
\begin{tikzpicture}[node distance=7mm]

	        \node[fpt] (vc) at (7.5, 6) {Vertex \\ Cover \\ \cite{HartungN13}};
		\node[fpt, text width=1.75cm] (ml) at (12, 6)  {Max Leaf Number \\ \cite{E15}};
		\node[fpt, text width=1.6cm] (dc) at (2.5, 6)  {Distance \\ to Clique};

		\node[para] (mcc) at (0, 4) {Minimum Clique Cover \\ \cite{ELW15}};
 		\node[fpt, line width = 2pt,text width=2.1cm] (dcc) at (2.5, 4) {Distance \\ to Co-Cluster};
 		\node[fpt, line width = 2pt, text width=1.7cm] (dcl) at (4.9, 4) {Distance\\ to Cluster};
 		\node[open,text width=2.3cm] (ddp) at (7.5, 4) {Distance to Disjoint Paths};
 		\node[xp,text width=1.5cm] (fes) at (10, 4) {Feedback Edge Set \\ \cite{ELW15}};
	 	\node[fpt,text width=1.8cm] (td) at (12.1, 4) {Treedepth \\ \cite{GHK22}};
 		\node[open,text width=1.85cm] (bw) at (14.3, 4) {Bandwidth};

 		\node[para] (is) at (0, 1.8) {Maximum \\ Independent Set \\ \cite{ELW15}};
 		\node[open,text width=1.8cm] (dcg) at (2.5, 2) {Distance to Cograph};
 		\node[para,text width=1.9cm] (dig) at (4.9, 2) {Distance to Interval \\ \cite{FoucaudMNPV17b}};
 		\node[w1, line width = 2pt, text width=1.7cm] (fvs) at (7.5, 2) {Feedback Vertex Set};
 		\node[para, line width = 2pt, text width=1.6cm] (pw) at (12.1, 2) {Pathwidth \\ \cite{BP21,LM22}};
 		\node[para,text width=1.7cm] (mxd) at (14.3, 2) {Maximum Degree \\ \cite{DiazPSL17}};
 		
 		\node[para,text width=1.8cm] (dpf) at (2.5, 0) {Distance to Perfect \\ \cite{ELW15}};
 		\node[para,text width=1.6cm] (tw) at (12.1, 0) {Treewidth \\ \cite{BP21,LM22}};
 		
 		\node[circ] (ker_vc) at (8.45, 6.55) {};
 		\draw (8.35, 6.45) -- (8.55, 6.65);
 		\node[circ] (ker_dc) at (3.25, 6.35) {};
 		\draw (3.15, 6.25) -- (3.35, 6.45);
 		\node[circ] (ker_vc) at (3.47, 4.26) {};
 		\draw (3.37, 4.16) -- (3.57, 4.36);
 		\node[circ] (ker_vc) at (5.69, 4.26) {};
 		\draw (5.59, 4.16) -- (5.79, 4.36);
 		\node[circ] (ker_td) at (12.96, 4.35) {};
 		\draw (12.86, 4.25) -- (13.06, 4.45);
 		\node[circ] (ker_bw) at (15.15, 4.25) {};		
 		\draw (15.05, 4.15) -- (15.25, 4.35);				
 		\node[circ, fill=white, inner sep=1.5pt] (ker_ml) at (12.85, 6.55) {};
 		
 	    \draw (dc) .. controls (1.5, 5) and (0, 5) .. (mcc)
 	    (dc) -- (dcc)
 	    (dc) .. controls (3.5, 5) and (4.5, 5) .. (dcl)
 	    (vc) .. controls (6.3, 5) and (3, 5) .. (dcc)
 	    (vc) .. controls (6.8, 5) and (5.5, 5) .. (dcl)
 	    (vc) -- (ddp)
 	    (vc) .. controls (8.5, 5) and (12, 5) .. (td)
 	    (ml) .. controls (11, 5) and (8, 5) .. (ddp)
 	    (ml) .. controls (11.5, 5) and (10, 5) .. (fes)
 	    (ml) .. controls (12.5, 5) and (14.25, 5) .. (bw)
 	    (mcc) -- (is)
 	    (dcc) -- (dcg)
 	    (dcl) -- (dig)
 	    (ddp) -- (fvs)
 	    (td) -- (pw)
 	    (pw) -- (tw)
 	    (bw) -- (mxd)
 	    (ddp) .. controls (8, 3) and (11.5, 3) .. (pw)
 	    (fvs) .. controls (8, 1) and (11.5, 1.25) .. (tw)
 	    (bw) .. controls (13.75, 3) and (12.75, 3) .. (pw)
 	    (fes) .. controls (10, 3) and (8, 3) .. (fvs)
 	    (ddp) .. controls (7, 3) and (5.5, 3) .. (dig)
 	    (dcl) .. controls (4.5, 3) and (3, 3) .. (dcg)
 	    (dcg) -- (dpf)
 	    (dig) .. controls (4.75, 1) and (2.8, 1) .. (dpf)
		(fvs) .. controls (7, 1) and (3.6, 1) .. (dpf);
 	    
 	    \node at (2.25, 7.1) (colors) {
 	    \fcolorbox{green}{green}{\rule{0pt}{3pt}\rule{3pt}{0pt}} \ --- \FPT;
 	    \fcolorbox{cyan}{cyan}{\rule{0pt}{3pt}\rule{3pt}{0pt}} \ --- \XP;
 	    \fcolorbox{yellow}{yellow}{\rule{0pt}{3pt}\rule{3pt}{0pt}} \ --- \W[1];
 	    \fcolorbox{red}{red}{\rule{0pt}{3pt}\rule{3pt}{0pt}} \ --- para-\NP.};
 		
	\end{tikzpicture}
}%
}
\caption{Hasse diagram of graph parameters and associated results for {\sc Metric Dimension}. 
An edge from a lower parameter to a higher parameter indicates that the lower one is upper bounded by a function of the higher one. 
Colors correspond to the known hardness with respect to the highlighted  parameter. 
The parameters for which the hardness remains an open question are not colored. 
The crossed bold circle in the upper-right corner means that \textsc{Metric Dimension} does not admit a polynomial kernel when parameterized by the marked parameter unless $\NP\subseteq \coNP/poly$; the white one if a polynomial kernel exists. 
The bold borders highlight parameters that are covered in this paper.
}
\label{fig:result-overview}
\end{figure}
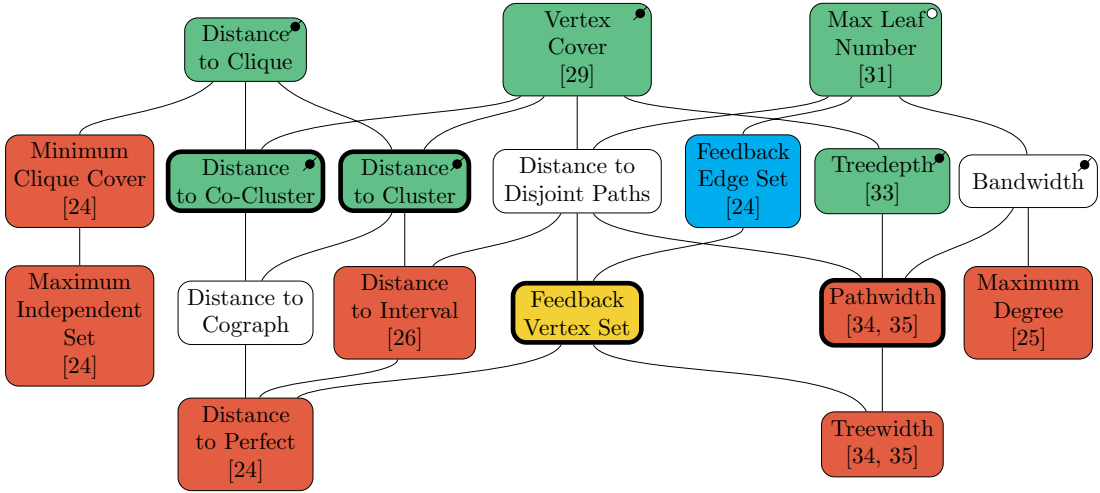

\subparagraph*{Our contributions.} 
In this paper, we continue the analysis of  structural parameterizations of \textsc{Metric Dimension}.
See the Hasse diagram in Figure~\ref{fig:result-overview} for a summary of known results and our contributions.
As mentioned before, it is known that {\sc Metric Dimension} is \W[1]-hard parameterized by the treewidth (in fact, even the pathwidth)~\cite{BP21}.
There are two natural directions to improve this result.
One direction was to show that {\sc Metric Dimension} is para-\NP-hard parameterized by the treewidth, which was proven even for pathwidth in~\cite{LM22}.
Another direction is to prove that {\sc Metric Dimension} is \W[1]-hard for a higher parameter than treewidth, {\it i.e.}, one for which the treewidth is upper bounded by a function of it, such as the pathwidth.
Another parameter fitting this profile is the feedback vertex set number since the treewidth of a graph $G$ is upper bounded by the feedback vertex set number of $G$ plus one.
Moreover, the complexity of {\sc Metric Dimension} parameterized by the feedback vertex set number is left as an open problem in~\cite{HartungN13}, the seminal paper on the parameterized complexity of {\sc Metric Dimension}.\footnote{Note that it is even unknown whether {\sc Metric Dimension} admits an \XP\ algorithm parameterized by the feedback vertex set number.}
We take this direction and answer this open question of~\cite{HartungN13} by proving that {\sc Metric Dimension} is \W[1]-hard parameterized by the combined parameter feedback vertex set number plus pathwidth (see \Cref{sec:fvs}). 
On the positive side, we then show that {\sc Metric Dimension} is \FPT~for the structural parameters the distance to cluster and the distance to co-cluster both of which are smaller parameters than the vertex cover number (see \Cref{sec:FPT}).
Note that the \FPT\ algorithm for the distance to cluster parameter implies an \FPT\ algorithm for the distance to clique parameter.

\section{Preliminaries}
\label{sec:preliminaries}

\subparagraph*{Graph theory.} 
We use standard graph-theoretic notation and refer the reader to~\cite{D12} for any undefined notation. 
For an undirected graph $G$, sets $V(G)$ and $E(G)$ denote its set of vertices and edges, respectively.
Two vertices $u,v\in V(G)$ are {\it adjacent} or {\it neighbors} if $uv\in E(G)$. 
The {\it open neighborhood} of a vertex $u\in V(G)$, denoted by $N(u):=N_G(u)$, is the set of vertices that are neighbors of $u$. 
The {\it closed neighborhood} of a vertex $u\in V(G)$ is denoted by $N[u]:=N_G[u]:=N_G(u)\cup \{u\}$.
For any $X \subseteq V(G)$ and $u\in V(G)$, $N_X(u) = N_G(u) \cap X$. 
Any two vertices $u,v\in V(G)$ are {\it true twins} if $N[u] = N[v]$, and are {\it false twins} if $N(u) = N(v)$. 
Observe that if $u$ and $v$ are true twins, then $uv \in E(G)$, but if they are only false twins, then $uv \not \in E(G)$.
For any $u,v \in V(G)$, we say that $u$ is connected to $v$ by a path $P$ of length $\ell$ if $P= w_0w_1\ldots w_\ell$, where $w_0 = u$ and $v = w_\ell$.
For a path $P$, we denote the length of $P$ by $\mathsf{lgt}(P)$,  and, for any $u,v \in V(P)$, we let $P[u,v]$ be the subpath of $P$ from $u$ to $v$.
The {\it distance} between two vertices $u,v\in V(G)$ in $G$, denoted by $d(u,v):=d_G(u,v)$, is the length of a $(u,v)$-shortest path in $G$. The {\it distance} between two given subsets $X,Y \subseteq V(G)$, denoted by $d(X,Y)$, is the minimum length of a shortest path from a vertex in $X$ to a vertex in $Y$, {\it i.e.}, $d(X,Y) = \min_{x \in X, y\in Y} d(x,y)$.
For a subset $S$ of $V(G)$, we denote the graph obtained by deleting $S$ from $G$ by $G - S$.
We denote the subgraph of $G$ induced on the set $S$ by $G[S]$.
The {\it complement} of a graph $G$ is a graph $H$ with the same vertex set, and such that any two vertices $u,v\in V(G)$ are adjacent in $H$ if and only if they are not adjacent in $G$.

A set of vertices $Y$ is said to be {an} \emph{independent set} if no two vertices in $Y$ are adjacent.
For a graph $G$, a set $X \subseteq V(G)$ is said to be {a} \emph{vertex cover} if $V(G) \setminus X$ is an independent set.
A vertex cover $X$ is a \emph{minimum vertex cover} if for any other vertex cover $Y$ of $G$, we have $|X| \le |Y|$.
\emph{The vertex cover number} of graph $G$ is the size of {a} minimum vertex cover of {a graph} $G$.
For a graph $G$, a set $X \subseteq V(G)$ is said to be {a} \emph{feedback vertex set} if $V(G) \setminus X$ is a acyclic graph.
We define the notation of \emph{the feedback vertex set number} in the analogous way.
A set of vertices $Y$ is said to be a \emph{clique} if any two vertices in $Y$ are adjacent.
We say graph $G$ is a \emph{cluster graph} if it a disjoint union of cliques.
Also, we say $G$ is a \emph{co-cluster graph} if its complement is  a cluster graph.
For a graph class $\calF$, we say a subset $X \subseteq V(G)$ is a \emph{$\calF$-modulator} of $G$ if $G - X \in \calF$.
A $\calF$-modulator $X$ is a \emph{minimum $\calF$-modulator} if for any other $\calF$-modulator $Y$ of $G$, we have $|X| \le |Y|$.
\emph{The distance to $\calF$} of graph $G$ is the size of a minimum $\calF$-modulator.

\subparagraph*{Metric Dimension.}
A subset of vertices $S\subseteq V(G)$ {\it resolves} a pair of vertices $u,v\in V(G)$ if there exists a vertex $w \in S$ such that $d(w,u)\neq d(w,v)$.
A vertex $u\in V(G)$ is {\it distinguished} by a subset of vertices $S\subseteq V(G)$ if, for any $v\in V(G)\setminus \{u\}$, there exists a vertex $w\in S$ such that $d(w,u)\neq d(w,v)$.
For an ordered subset of vertices $S=\{s_1,\dots,s_k\}\subseteq V(G)$ and a single vertex $u\in V(G)$, the {\it distance vector} of $S$ with respect to $u$ is $r(S|u):=(d(s_1,u),\dots,d(s_k,u))$.
The following observation is used throughout the paper.

\begin{observation}\label{obs:twins}
Let $G$ be a graph. 
Then, for any (true or false) twins $u,v \in V(G)$ and any $w \in V(G) \setminus \{u,v\}$, $d(u,w) = d(v,w)$, 
and so, for any resolving set $S$ of $G$, $S\cap \{u,v\} \neq \emptyset$.
\end{observation}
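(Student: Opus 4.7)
The plan is to establish the distance equality $d(u,w) = d(v,w)$ first and then derive the statement about resolving sets as a direct corollary. To prove the equality, I would argue by contradiction: suppose without loss of generality that $d(u,w) < d(v,w)$, fix a shortest $(u,w)$-path $P = u, x_1, \ldots, x_\ell = w$ of length $\ell = d(u,w)$, and aim to exhibit a $(v,w)$-walk of length at most $\ell$, which will contradict the strict inequality.

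A short case distinction should suffice. If $v$ lies on $P$, the subpath $P[v,w]$ is already a $(v,w)$-walk of length at most $\ell - 1 < d(v,w)$, a contradiction. Otherwise $v \notin V(P)$, and in particular $x_1 \neq v$. Here I would invoke the twin hypothesis to show that $x_1 \in N(v)$: for true twins, $N[u] = N[v]$ gives $N(u) \setminus \{v\} = N(v) \setminus \{u\}$, while for false twins $N(u) = N(v)$ directly gives $N(u) \setminus \{v\} \subseteq N(v)$. Either way, replacing the first vertex $u$ of $P$ by $v$ produces a $(v,w)$-walk $v, x_1, x_2, \ldots, x_\ell$ of length exactly $\ell$, yielding $d(v,w) \leq \ell = d(u,w)$, contradicting the assumption.

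For the second assertion, I would assume that $S$ is a resolving set with $S \cap \{u,v\} = \emptyset$. Then every $w \in S$ satisfies $w \notin \{u,v\}$, so by the first part $d(w,u) = d(w,v)$ for every such $w$; hence no vertex of $S$ distinguishes $u$ from $v$, contradicting that $S$ is a resolving set. I do not anticipate any serious obstacle; the only subtlety is verifying the inclusion $N(u) \setminus \{v\} \subseteq N(v)$ uniformly for the two flavors of twins, which follows directly from the respective definitions.
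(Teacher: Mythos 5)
Your proposal is correct and takes essentially the same route as the paper: both arguments rest on the fact that $N(u)\setminus\{v\}=N(v)\setminus\{u\}$, so the first step of a shortest $(u,w)$-path can be rerouted to start at $v$ instead, and the resolving-set claim then follows immediately. Your version is in fact slightly more careful than the paper's (it explicitly handles the case where $v$ lies on the chosen shortest path, which the paper glosses over), but the underlying idea is identical.
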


\begin{proof}
As $w\in V(G) \setminus \{u,v\}$, and $u$ and $v$ are (true or false) twins, the shortest $(u,w)$- and $(v,w)$-paths contain a vertex of $N:=N(u)\setminus\{v\}=N(v)\setminus\{u\}$. 
Thus, $d(u,w)=\min\limits_{s\in S}d(s,w)+1=d(v,w)$, and so, any resolving set $S$ of $G$ contains at least one of $u$ and $v$.
\end{proof}

\subparagraph*{Parameterized Complexity.}
An instance of a parameterized problem $\Pi$ comprises an input $I$, which is an input of the classical instance of the problem and an integer $\ell$, which is called as the parameter.
A problem $\Pi$ is said to be \emph{fixed-parameter tractable} or in \FPT\ if given an instance $(I,\ell)$ of $\Pi$, we can decide whether or not $(I,\ell)$ is a \yes-instance of $\Pi$ in  time $f(\ell)\cdot |I|^{\OO(1)}$.
for some computable function $f$ whose value depends only on $\ell$. 

A \emph{compression} of a parameterized problem $\Pi_1$ into a (non-parameterized) problem $\Pi_2$ is a polynomial algorithm that maps each instance $(I_1, \ell_1)$ of $\Pi_1$ to an instance $I$ of $\Pi_2$ such that $(1)$ $(I, \ell)$ is a \yes-instance of $\Pi_1$ if and only if $I_2$ is a \yes-instance of $\Pi_2$, and $(2)$ the size of  $I_2$ is bounded by $g(\ell)$ for a computable function $g$.
The output $I_2$ is also called a \emph{compression} and
the function $g$ is said to be the size of the compression.
A compression is polynomial if $g$ is polynomial.
A compression is said to be \emph{kernel} if $\Pi_1 = \Pi_2$.
It is known that the problem is \FPT\ if and only if it admits a kernel (See, for example, \cite[Lemma 2.2]{DBLP:books/sp/CyganFKLMPPS15}). 

It is typical to describe a compression or kernelization algorithm as a series of reduction rules.
A \emph{reduction rule} is a polynomial time algorithm that takes as an input an instance of a problem and outputs another (usually reduced) instance.
A reduction rule said to be \emph{applicable} on an instance if the output instance is different from the input instance.
A reduction rule is \emph{safe} if the input instance is a \yes-instance if and only if the output instance is a \yes-instance.
For more on parameterized complexity and related terminologies, we refer the reader to the recent book by Cygan et al.~\cite{DBLP:books/sp/CyganFKLMPPS15}.
\section{Feedback Vertex Set Number plus Pathwidth}
\label{sec:fvs}

In this section, we prove that {\sc Metric Dimension} is $\W[1]$-hard parameterized by the combined parameter feedback vertex set number plus pathwidth.
To prove this, we reduce from the {\sc NAE-Integer-3-Sat} problem defined as follows. 
An instance of this problem consists of a set $X$ of variables, a set $C$ of clauses, and an integer $d$.
Each variable takes a value in $\{1,\ldots,d\}$, 
and clauses are of the form $(x \leq a_x, y \leq a_y, z \leq a_z)$, where $a_x,a_y,a_z \in \{1,\ldots,d\}$. 
A clause is satisfied if not all three inequalities are true and not all are false.
The goal is to find an assignment of the variables that satisfies all given clauses.
This problem was shown to be $\W[1]$-hard parameterized by the number of variables~\cite{stacs2015}.
The remainder of this section is devoted to the proof of the following.

\begin{theorem}
\label{thm:fvs}
{\sc Metric Dimension} is $\W[1]$-hard parameterized by the combined parameter feedback vertex set number plus pathwidth.
\end{theorem}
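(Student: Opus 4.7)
The plan is to exhibit a parameterized reduction from \textsc{NAE-Integer-3-Sat} (parameter: number $n$ of variables) to \textsc{Metric Dimension} such that the produced graph $G$ satisfies $\fvs(G) + \pw(G) = f(n)$ for some computable function $f$, while the budget $k$ also depends only on $n$ and on combinatorial quantities of the reduction.

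For each variable $x_i$, I would construct a variable gadget $H_i$ built around a path $P_i = p_{i,0} p_{i,1} \cdots p_{i,d}$ of length $d$. Attached at one (or both) endpoints of $P_i$ are a few pairs of false twins. By \Cref{obs:twins}, every resolving set must contain at least one vertex from each such twin pair, so these pairs contribute a lower bound on $k$. By tuning the budget tightly, we arrange that inside each $H_i$ exactly one \emph{free} vertex $s_i \in V(P_i)$ remains to be chosen by the solver, and the position of $s_i$ on $P_i$ — that is, the index $j$ with $s_i = p_{i,j}$ — encodes the value assigned to $x_i$ in the \textsc{NAE-Integer-3-Sat} instance.

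For each clause $C = (x \leq a_x, y \leq a_y, z \leq a_z)$, I would construct a clause gadget $K_C$ attached to appropriate positions on $P_x, P_y, P_z$ through short \emph{ruler} paths of carefully chosen lengths. Inside $K_C$ we place two pairs of false twins $\{p_C^+, q_C^+\}$ and $\{p_C^-, q_C^-\}$, the \emph{positive} and \emph{negative} testers. We again use \Cref{obs:twins} to force one of each pair into the resolving set, and we tune the budget so that the extra vertex needed to resolve $\{p_C^+, q_C^+\}$ (respectively $\{p_C^-, q_C^-\}$) must come from outside $K_C$ — specifically from $\{s_x, s_y, s_z\}$. The ruler lengths are set so that $s_i$ resolves $\{p_C^+, q_C^+\}$ exactly when the inequality associated with $x_i$ in $C$ is false (i.e., $x_i > a_{x_i}$), and symmetrically $s_i$ resolves $\{p_C^-, q_C^-\}$ exactly when the inequality is true. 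Hence, both testers are simultaneously resolved iff clause $C$ is NAE-satisfied.

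To control the structural parameters, I would stack the $n$ variable paths along a shared backbone and let each clause gadget hang off as an almost-tree attached to the three relevant paths. Breaking the few cycles introduced by the three-way connections of a single clause costs only $\OO(1)$ vertices, but since there can be unboundedly many clauses we must route the gadgets through shared connection vertices so that the total feedback vertex set remains $\OO(n)$. A path decomposition is then obtained by sweeping along the backbone, at each step keeping in the bag one endpoint of every variable path together with the current clause gadget, giving pathwidth $\OO(n)$ as well. The main obstacle will be the joint calibration of the ruler lengths and of the budget $k$: the ruler distances must encode the inequalities $x_i \leq a_{x_i}$ as clean threshold conditions on the index $j$, and, more delicately, we must rule out \emph{spurious} resolving sets — those in which the solver exploits a distance collision between two different gadgets, or shifts $s_i$ away from the intended encoding, to resolve a tester pair without witnessing NAE-satisfaction.
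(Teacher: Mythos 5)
Your overall architecture is the same as the paper's: a reduction from \textsc{NAE-Integer-3-Sat} parameterized by the number of variables, with variable gadgets that encode a value in $\{1,\dots,d\}$ by the position of a single ``free'' solution vertex, clause gadgets containing twin pairs forced into the solution by Observation~\ref{obs:twins}, connecting paths whose lengths turn the inequalities $x\le a_x$ into threshold conditions on that position, a shared hub vertex to keep the feedback vertex set at $\OO(n)$, and a tightly calibrated budget. The two-tester idea (one pair certifying ``not all true'', one certifying ``not all false'') is exactly how the paper splits each clause gadget into $H_c$ and $H_{\overline{c}}$.

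However, there is a genuine gap in your variable gadget. Attaching false-twin pairs to the endpoints of a bare path $P_i$ and ``tuning the budget tightly'' does not force any resolving set to contain a vertex of $P_i$: twin pairs only give a lower bound on vertices chosen \emph{inside those pairs}, and a tight budget bounds the solution from above, not below. Nothing prevents a resolving set from spending the $n$ slack vertices elsewhere (e.g., two extra vertices inside one clause gadget to resolve both of its testers locally, and none on some $P_j$), in which case the forward direction of the correctness proof collapses -- you cannot read off a value for $x_j$, and the counting argument that ``exactly one free vertex remains per variable gadget'' becomes circular. You need a pair of vertices in each variable gadget that \emph{cannot} be resolved from outside it. The paper achieves this by making the variable gadget a cycle of length $2d+2$ with two anchor vertices $u_1^x,u_2^x$ and two parallel internal paths $v_1^x\cdots v_d^x$ and $w_1^x\cdots w_d^x$: since every path leaving the gadget passes through an anchor and $d(u,v_i^x)=d(u,w_i^x)$ for both anchors $u$, the pair $\{v_i^x,w_i^x\}$ is indistinguishable from outside, so $S$ must contain at least one vertex among $\{v_i^x,w_i^x : i\in[d]\}$ -- and this forced vertex is precisely the value encoder. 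Replacing your path by such a cycle (and then verifying, as the paper does in its Claims on shortest-path routings, that the chosen $v_{i}^x$ actually resolves the intended tester pair and that no distance collisions create spurious resolutions) closes the gap; the rest of your plan, including the $\fvs$ and pathwidth accounting, is sound.
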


We present the reduction in \Cref{subsec:reduction}, prove preliminary claims in \Cref{subsec:prelim_clm}, and show the correctness of the reduction in \Cref{subsec:correctness}.

\subsection{Reduction}
\label{subsec:reduction}

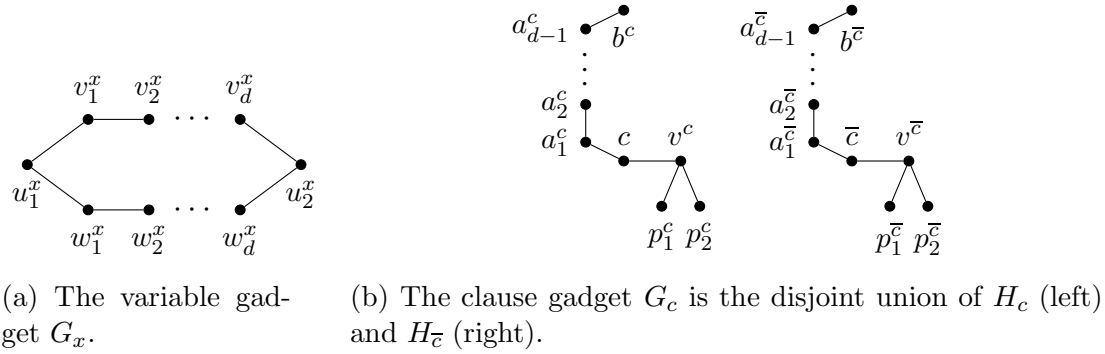
\begin{figure}
\centering
\begin{subfigure}[b]{.25\textwidth}
\centering
\begin{tikzpicture}[scale=.8]
\node[circ,label=below:{\small $u^x_1$}] (u1) at (0,.75) {};

\node[circ,label=above:{\small $v^x_1$}] (v1) at (1,1.5) {};
\node[circ,label=above:{\small $v^x_2$}] (v2) at (2,1.5) {};
\node[draw=none] at (2.75,1.5) {$\cdots$};
\node[circ,label=above:{\small $v^x_d$}] (vd) at (3.5,1.5) {};

\node[circ,label=below:{\small $w^x_1$}] (w1) at (1,0) {};
\node[circ,label=below:{\small $w^x_2$}] (w2) at (2,0) {};
\node[draw=none] at (2.75,0) {$\cdots$};
\node[circ,label=below:{\small $w^x_d$}] (wd) at (3.5,0) {};

\node[circ,label=below:{\small $u^x_2$}] (u2) at (4.5,.75) {};

\draw (u1) -- (v1)
(u1) -- (w1)
(v1) -- (v2)
(w1) -- (w2)
(vd) -- (u2)
(wd) -- (u2); 
\end{tikzpicture}
\caption{The variable gadget $G_x$.}
\label{fig:vargad}
\end{subfigure}
\hspace*{.5cm}
\begin{subfigure}[b]{.65\textwidth}
\centering
\begin{tikzpicture}
\node[circ,label=above:{\small $c$}] (c1) at (0,.75) {};
\node[circ,label=above:{\small $v^c$}] (v) at (.75,.75) {};
\node[circ,label=below:{\small $p_1^c$}] (t1) at (.5,.15) {};
\node[circ,label=below:{\small $p_2^c$}] (t2) at (1,.15) {};
\node[circ,label=left:{\small $a^c_1$}] (a1) at (-.5,1) {};
\node[circ,label=left:{\small $a^c_2$}] (a2) at (-.5,1.5) {};
\node[circ,label=left:{\small $a^c_{d-1}$}] (ad) at (-.5,2.5) {};
\node[draw=none,rotate=90] at (-.5,2) {$\cdots$};
\node[circ,label=below:{\small $b^c$}] (bc) at (0,2.75) {};

\draw (c1) -- (v)
(v) -- (t1)
(v) -- (t2)
(c1) -- (a1)
(a1)-- (a2)
(ad) -- (bc);

\node[circ,label=above:{\small $\overline{c}$}] (c1b) at (3,.75) {};
\node[circ,label=above:{\small $v^{\overline{c}}$}] (vb) at (3.75,.75) {};
\node[circ,label=below:{\small $p_1^{\overline{c}}$}] (t1b) at (3.5,.15) {};
\node[circ,label=below:{\small $p_2^{\overline{c}}$}] (t2b) at (4,.15) {};
\node[circ,label=left:{\small $a^{\overline{c}}_1$}] (a1b) at (2.5,1) {};
\node[circ,label=left:{\small $a^{\overline{c}}_2$}] (a2b) at (2.5,1.5) {};
\node[circ,label=left:{\small $a^{\overline{c}}_{d-1}$}] (adb) at (2.5,2.5) {};
\node[draw=none,rotate=90] at (2.5,2) {$\cdots$};
\node[circ,label=below:{\small $b^{\overline{c}}$}] (bcb) at (3,2.75) {};

\draw (c1b) -- (vb)
(vb) -- (t1b)
(vb) -- (t2b)
(c1b) -- (a1b)
(a1b)-- (a2b)
(adb) -- (bcb);
\end{tikzpicture}
\caption{The clause gadget $G_c$ is the disjoint union of $H_c$ (left) and $H_{\overline{c}}$ (right).}
\label{fig:clausegad}
\end{subfigure}
\caption{The gadgets in the proof of \Cref{thm:fvs}.}
\end{figure}

We reduce from {\sc NAE-Integer-3-Sat}: 
given an instance $(X,C,\allowbreak d)$ of this problem, we construct an instance $(G,k)$ of {\sc Metric Dimension} as follows. 
\begin{itemize}
\item For each variable $x \in X$, we introduce a cycle $G_x$ of length $2d+2$ 
which has two distinguished \emph{anchor vertices} $u^x_1$ and $u^x_2$ as depicted in \Cref{fig:vargad};
for convenience, we may also refer to $u_1^x$ as $v^x_0$ or $w^x_0$, and to $u^x_2$ as $v^x_{d+1}$ or $w^x_{d+1}$.
\item For each clause $c = (x \leq a_x, y \leq a_y, z \leq a_z)$, we introduce the gadget $G_c$ depicted in \Cref{fig:clausegad}
consisting of two vertex-disjoint copies $H_c$ and $H_{\overline{c}}$ of the same graph. 
More precisely, for $\ell \in \{c,\overline{c}\}$, $H_\ell$ consists of a $K_{1,3}$ on the vertex set $\{\ell,v^\ell,p^\ell_1,p^\ell_2\}$, 
where $v^\ell$ has degree three,
and a path $P_{b^\ell}$ of length $d$ connects $\ell$ to $b^\ell$. 

The subgraph of $G_c$ induced by $\{\ell,v^\ell,p^\ell_1,p^\ell_2~|~\ell \in \{c,\overline{c}\}\}$ is referred to as the \emph{core of $G_c$}.

\item We further connect $G_c$ to $G_x,G_y$, and $G_z$ as follows.
\begin{itemize}
\item For every $t \in \{x,y,z\}$, we connect $b^c$ to $u^t_1$ by a path $P^{t,c}_1$ of length $4d-a_t$, and $v^c$ to $u^t_2$ by a path $P^{t,c}_2$ of length $4d+a_t-1$. 
Furthermore, letting $w^{t,c}$ be the neighbor of $v^c$ on $P^{t,c}_2$, 
we attach a copy $W^{t,c}$ of $K_{1,3}$ to $w^{t,c}$ by identifying $w^{t,c}$ with one of the leaves.

We denote by $t^{t,c}_1$ and $t^{t,c}_2$ the two remaining leaves and refer to $W^{t,c}$ as a \emph{pendant claw}.

\item Similarly, for every $t \in \{x,y,z\}$, we connect $b^{\overline{c}}$ to $u^t_2$ by a path $P^{t,\overline{c}}_2$ of length $3d+a_t$, and $v^{\overline{c}}$ to $u^t_1$ by a path $P^{t,\overline{c}}_1$ of length $5d-a_t$.
Furthermore, letting $w^{t,{\overline{c}}}$ be the neighbor of $v^{\overline{c}}$ on $P^{t,\overline{c}}_1$, 
we attach a copy $W^{t,\overline{c}}$ of $K_{1,3}$ to $w^{t,{\overline{c}}}$ by identifying $w^{t,{\overline{c}}}$ with one of the leaves.

We denote by $t^{t,\overline{c}}_1$ and $t^{t,\overline{c}}_2$ the two remaining leaves 
and refer to $W^{t,\overline{c}}$ as a \emph{pendant claw}.
\end{itemize}

\item Finally, we introduce a 3-vertex path $P = t_1pt_2$ which we connect to the clause gadgets as follows.
\begin{itemize}
\item For every clause $c \in C$ and $\ell \in \{c,\overline{c}\}$, we connect $p$ to $v^\ell$ by a path $P_\ell$ of length $2d$.
\item Furthermore, letting $w^\ell$ be the neighbor of $p$ on $P_\ell$,
we attach a copy $W^\ell$ of $K_{1,3}$ to $w^\ell$ by identifying $w^\ell$ with one of the leaves.

We denote by $t^\ell_1$ and $t^\ell_2$ the two remaining leaves and refer to $W^\ell$ as a \emph{pendant claw}.
\end{itemize}
\end{itemize}
This concludes the construction of $G$ (see \Cref{fig:fvsred}).
We set $k = |X| + 10|C| + 1$, and return $(G, k)$ as an instance of \textsc{Metric Dimension}.

Observe that the feedback vertex set number of $G$ is at most $2|X|+1$:
indeed, removing $\{p\} \cup\{u_1^x,u_2^x~|~x \in X\}$ from $G$ results in a graph without cycles.
Furthermore, observe that the pathwidth of $G$ is at most $2|X|+\alpha$ for some constant $\alpha>0$:
indeed, after removing the feedback vertex set above (it can be included in each bag of the path decomposition), each of the remaining connected components is a tree with a constant number of vertices of degree greater than $2$.
Thus, each of these connected components can be dealt with sequentially along the path decomposition, and, for each such connected component, the constant number of vertices of degree greater than $2$ that it contains can be included in all of the bags dedicated to it (hence, decomposing these connected components into disjoint paths which have pathwidth~$1$).

The main ideas behind the reduction are as follows. Using Observation~\ref{obs:twins}, the size of $k$, and other properties of resolving sets in graphs, it can be deduced that, for all $x\in X$, exactly one vertex from $\{v_1^x,\ldots,v_d^x,w_1^x,\ldots,w_d^x\}$ must be in any resolving set $S$ of $G$ of size $k$, and, for each clause $c$, exactly one vertex from $\{p_1^c,p_2^c\}$ must be in $S$. Then, the crux is that, for any clause $c = (x \leq a_x, y \leq a_y, z \leq a_z)$, the only vertices in $S$ that can distinguish $c$ from whichever of $p_1^c$ and $p_2^c$ is not in $S$, say $p_2^c$, are the ones in $\{v_1^x,\ldots,v_d^x,w_1^x,\ldots,w_d^x\}$, $\{v_1^y,\ldots,v_d^y,w_1^y,\ldots,w_d^y\}$, and $\{v_1^z,\ldots,v_d^z,w_1^z,\ldots,w_d^z\}$ whose numbers in the subscripts satisfy the inequalities $x \leq a_x$, $y \leq a_y$, and $z \leq a_z$, respectively. Indeed, since the path lengths depend on the inequalities, only for those vertices that satisfy an inequality does the shortest path from them to $p_2^c$ pass through $c$ before $p_2^c$ ({\it i.e.}, the shortest path contains $P_1^{x,c}$), and thus, distinguishes them. The same holds for $\overline{c}$ and $p_2^{\overline{c}}$ except that the inequalities need to be unsatisfied. The vertex $p$ and the paths connecting it to the clause gadgets exist to create shortcuts that prevent vertices from a variable cycle from distinguishing $c$ from $p_2^c$ ($\overline{c}$ from $p_2^{\overline{c}}$, resp.) for a clause $c$ that does not contain it. Lastly, the pendant claws are added in order to resolve the remainder of the pairs of vertices.

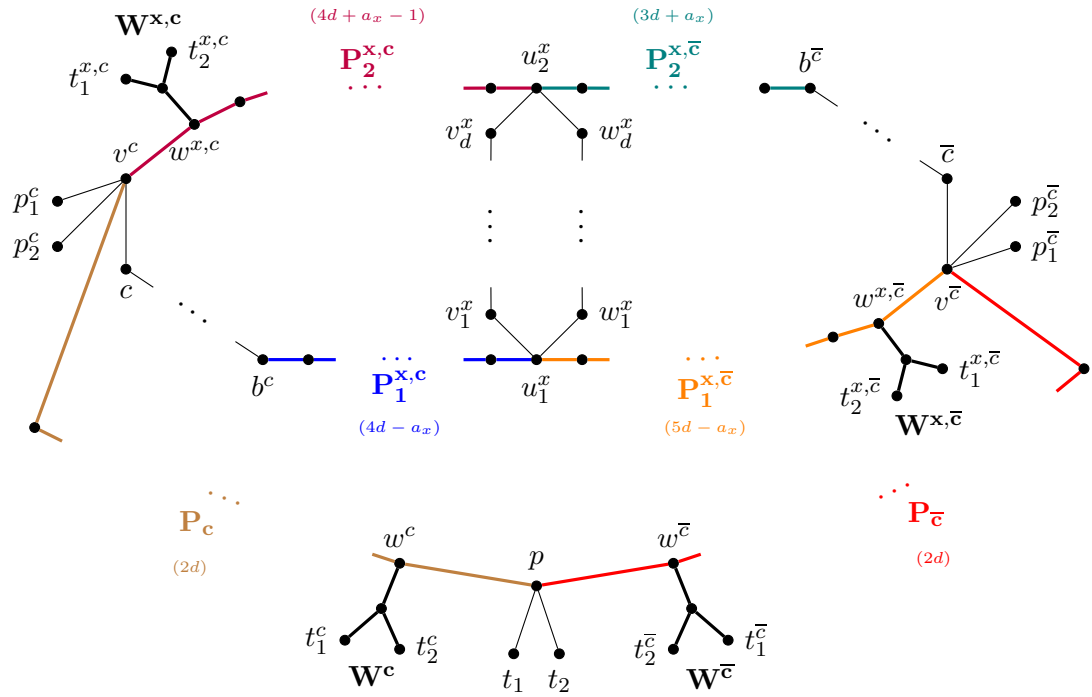
\begin{figure}
\centering
\begin{tikzpicture}[scale=1.2]
\node[circ,label=below:{\small $u_1^x$}] (u1x) at (8,2.5) {};
\node[circ,label=above:{\small $u_2^x$}] (u2x) at (8,5.5) {};
\node[circ,label=left:{\small $v_1^x$}] (v1x) at (7.5,3) {};
\node[circ,label=right:{\small $w_1^x$}] (w1x) at (8.5,3) {};
\node[circ,label=left:{\small $v_d^x$}] (vdx) at (7.5,5) {};
\node[circ,label=right:{\small $w_d^x$}] (wdx) at (8.5,5) {};
\node[draw=none,rotate=90] at (7.5,4) {$\cdots$};
\node[draw=none,rotate=90] at (8.5,4) {$\cdots$};

\draw (u1x) -- (v1x)
(u1x) -- (w1x) 
(vdx) -- (u2x)
(wdx)-- (u2x)
(v1x) -- (7.5,3.3)
(w1x) -- (8.5,3.3)
(vdx) -- (7.5,4.7)
(wdx) -- (8.5,4.7);

\node[circ,label=below:{\small $c$}] (c) at (3.5,3.5) {};
\node[circ,label=above:{\small $v^c$}] (vc) at (3.5,4.5) {};
\node[circ,label=left:{\small $p_1^c$}] (p1c) at (2.75,4.25) {};
\node[circ,label=left:{\small $p_2^c$}] (p2c) at (2.75,3.75) {};
\node[circ,label=below:{\small $w^{x,c}$}] (wxc) at (4.25,5.1) {};
\node[draw=none,rotate=-35] at (4.25,3) {$\cdots$};
\node[circ,label=below:{\small $b^c$}] (bc) at (5,2.5) {};
\node[circ] (pxc1a) at (5.5,2.5) {};
\node[circ] (pxc1b) at (7.5,2.5) {};
\node[draw=none,below,blue] at (6.5,2.5) {\small $\mathbf{P^{x,c}_1}$};
\node[draw=none,blue] at (6.5,2.5) {\small $\cdots$};
\node[draw=none,blue] at (6.5,1.75) {\tiny $(4d-a_x)$};
\node[circ] (c1) at (3.9,5.5) {};
\node[circ,label=left:{\small $t^{x,c}_1$}] (txc1) at (3.5,5.6) {};
\node[circ,label=right:{\small $t^{x,c}_2$}] (txc2) at (4,5.9) {};
\node[draw=none] at (3.75,6.2) {\small $\mathbf{W^{x,c}}$};
\node[circ] (pxc2a) at (4.75,5.35) {};
\node[circ] (pxc2b) at (7.5,5.5) {};
\node[draw=none,above,purple] at (6.15,5.5) {\small $\mathbf{P^{x,c}_2}$}; 
\node[draw=none,rotate=2,purple] at (6.15,5.5) {$\cdots$};
\node[draw=none,purple] at (6.15,6.3) {\tiny $(4d+a_x-1)$};

\draw (c) -- (vc)
(vc) -- (p1c)
(vc) -- (p2c)
(c) -- (3.8,3.3)
(bc) -- (4.7,2.7);

\draw[very thick] (wxc) -- (c1)
(c1) -- (txc1)
(c1) -- (txc2);

\draw[very thick,blue] (bc) -- (pxc1a)
(pxc1a) -- (5.8,2.5)
(u1x) -- (pxc1b)
(pxc1b) -- (7.2,2.5);

\draw[very thick,purple] (vc) -- (wxc)
(wxc) -- (pxc2a)
(pxc2a) -- (5.05,5.45)
(u2x) -- (pxc2b)
(pxc2b) -- (7.2,5.5);

\node[circ,label=below:{\small $v^{\overline{c}}$}] (vcbar) at (12.5,3.5) {};
\node[circ,label=above:{\small $\overline{c}$}] (cbar) at (12.5,4.5) {};
\node[circ,label=right:{\small $p_2^{\overline{c}}$}] (p2cbar) at (13.25,4.25) {};
\node[circ,label=right:{\small $p_1^{\overline{c}}$}] (p1cbar) at (13.25,3.75) {};
\node[draw=none,rotate=-35] at (11.75,5) {$\cdots$};
\node[circ,label=above:{\small $b^{\overline{c}}$}] (bcbar) at (11,5.5) {};
\node[circ] (p2xcbara) at (10.5,5.5) {};
\node[circ] (p2xcbarb) at (8.5,5.5) {};
\node[draw=none,teal] at (9.5,5.5) {$\cdots$};
\node[draw=none,above,teal] at (9.5,5.5) {\small $\mathbf{P_2^{x,\overline{c}}}$};
\node[draw=none,teal] at (9.5,6.3) {\tiny $(3d+a_x)$};
\node[circ,label=above:{\small $w^{x,\overline{c}}$}] (wxcbar) at (11.75,2.9) {};
\node[circ] (p1xcbarba) at (11.25,2.75) {};
\node[circ] (p1xcbarbb) at (8.5,2.5) {};
\node[draw=none,below,orange] at (9.85,2.5) {\small $\mathbf{P^{x,\overline{c}}_1}$};
\node[draw=none,rotate=2,orange] at (9.85,2.5) {$\cdots$};
\node[draw=none,orange] at (9.85,1.75) {\tiny $(5d-a_x)$};
\node[circ] (c2) at (12.05,2.5) {};
\node[circ,label=right:{\small $t^{x,\overline{c}}_1$}] (t1xcbar) at (12.45,2.4) {};
\node[circ,label=left:{\small $t^{x,\overline{c}}_2$}] (t2xcbar) at (11.95,2.1) {};
\node[draw=none] at (12.3,1.8) {\small $\mathbf{W^{x,\overline{c}}}$};

\draw (vcbar) -- (cbar)
(vcbar) -- (p2cbar)
(vcbar) -- (p1cbar) 
(cbar) -- (12.2,4.7)
(bcbar) -- (11.3,5.3);

\draw[very thick] (wxcbar) -- (c2)
(c2) -- (t1xcbar)
(c2) -- (t2xcbar);

\draw[very thick,teal] (bcbar) -- (p2xcbara)
(p2xcbarb) -- (u2x)
(p2xcbarb) -- (8.8,5.5);

\draw[very thick,orange] (vcbar) -- (wxcbar)
(wxcbar) -- (p1xcbarba)
(p1xcbarba) -- (10.95,2.65)
(u1x) -- (p1xcbarbb)
(p1xcbarbb) -- (8.8,2.5);

\node[circ,label=above:{\small $p$}] (p) at (8,0) {};
\node[circ,label=below:{\small $t_1$}] (t1) at (7.75,-.75) {};
\node[circ,label=below:{\small $t_2$}] (t2) at (8.25,-.75) {};
\node[circ,label=above:{\small $w^c$}] (wc) at (6.5,.25) {};
\node[circ] (c3) at (6.3,-.25) {};
\node[draw=none] at (6.2,-1) {\small $\mathbf{W^c}$};
\node[circ,label=left:{\small $t^c_1$}] (tc1) at (5.9,-.6) {};
\node[circ,label=right:{\small $t^c_2$}] (tc2) at (6.5,-.7) {};
\node[circ,label=above:{\small $w^{\overline{c}}$}] (wcbar) at (9.5,.25) {};
\node[circ] (c4) at (9.7,-.25) {};
\node[draw=none] at (9.9,-1) {\small $\mathbf{W^{\overline{c}}}$};
\node[circ,label=right:{\small $t^{\overline{c}}_1$}] (tcbar1) at (10.1,-.6) {};
\node[circ,label=left:{\small $t^{\overline{c}}_2$}] (tcbar2) at (9.5,-.7) {};
\node[circ] (r1) at (2.5,1.75) {};
\node[draw=none,rotate=-20,brown] at (4.6,.95) {$\cdots$}; 
\node[draw=none,below left,brown] at (4.6,.95) {\small $\mathbf{P_c}$};
\node[draw=none,brown] at (4.2,.2) {\tiny $(2d)$};
\node[circ] (r2) at (14,2.4) {};
\node[draw=none,rotate=25,red] at (11.95,1.05) {$\cdots$};
\node[draw=none,below right,red] at (11.95,1.05) {\small $\mathbf{P_{\overline{c}}}$};
\node[draw=none,red] at (12.35,.3) {\tiny $(2d)$};

\draw (p) -- (t1)
(p) -- (t2);

\draw[very thick] (c3) -- (wc)
(c3) -- (tc1)
(c3) -- (tc2)
(wcbar) -- (c4)
(c4) -- (tcbar1)
(c4) -- (tcbar2);  

\draw[very thick,brown] (p) -- (wc)
(wc) -- (6.2,.35)
(r1) -- (vc)
(r1) -- (2.8,1.6);

\draw[very thick,red] (p) -- (wcbar)
(wcbar) -- (9.8,.35)
(r2) -- (vcbar)
(r2) -- (13.7,2.15);
\end{tikzpicture}
\caption{An illustration of the reduction in the proof of \Cref{thm:fvs} with one clause gadget $G_c$ and one variable gadget $G_x$, where $x$ is a variable occurring in $c$ (in between parentheses is the length of the corresponding path).}
\label{fig:fvsred}
\end{figure}

\subsection{Preliminary Claims}
\label{subsec:prelim_clm}

To show that $(G,k)$ is a \yes-instance for {\sc Metric Dimension} if and only if the instance $(X,C,d)$ is satisfiable, we first prove the following claims.

\begin{claim}
\label{clm:distances}
For any two distinct $s,t \in \{c,\overline{c}~|~c \in C\}$ and any two distinct variables $x,y \in X$, the following hold.
\begin{itemize}
\item[(i)] The shortest path from $H_s$ to $H_t$ contains $P_s$ and $P_t$ as subpaths and has length $4d$.
\item[(ii)] $d(V(G_x),V(G_y)) \geq 6d$.
\item[(iii)] If $x$ appears in the clause corresponding to $s$, then $d(V(G_x),V(H_s)) \geq 3d$.
\item[(iv)] If $x$ does not appear in the clause corresponding to $s$,
then any shortest path from $G_x$ to $H_s$ contains $P_s$ as a subpath and has length at least $8d$.
\end{itemize} 
\end{claim}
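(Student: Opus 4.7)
The proof is essentially a careful enumeration of the possible $(A,B)$-shortest paths in $G$ for each pair of subgraphs $(A,B)$ under consideration. The plan is to exploit the rigid structure of the construction: the only ways to travel between different ``parts'' of $G$ are (a) through the central vertex $p$ using the paths $P_c$ and $P_{\overline c}$, or (b) through some path $P_i^{t,c}$ or $P_i^{t,\overline c}$ connecting a variable gadget to a clause half-gadget. In particular, pendant claws and the paths $P_{b^\ell}$ hanging off of $\ell \in \{c,\overline c\}$ are ``dead ends'', so they never lie on the interior of a shortest path between distinct parts. I would begin by recording this observation, as it justifies restricting attention in every case to a finite list of concatenations of the paths $P_\ell$, $P_i^{t,\ell}$, plus short traversals inside variable cycles and clause cores.

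For (iii), I would simply evaluate the four ``direct'' paths connecting $G_x$ to $H_s$ and take the minimum: $\mathsf{lgt}(P_1^{x,c}) = 4d - a_x$, $\mathsf{lgt}(P_2^{x,c}) = 4d + a_x - 1$, $\mathsf{lgt}(P_1^{x,\overline c}) = 5d - a_x$, $\mathsf{lgt}(P_2^{x,\overline c}) = 3d + a_x$. Using $1 \le a_x \le d$, all four are at least $3d$, and any other route (through $p$ or via another variable) is longer by the argument below, so the minimum of $3d$ is attained. For (i), the path that concatenates $P_s$ and $P_t$ through $p$ has length $2d + 2d = 4d$; any alternative leaves $H_s$ through one of its connecting paths $P_i^{t,s}$, must enter some variable gadget $G_t$, and then leave again; since each $P_i^{t,s}$ already has length at least $3d$ and so does the analogous exit path to $H_t$, such a route costs $\ge 6d$, and in particular cannot beat $4d$.

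For (iv), since $x$ is not a variable of the clause corresponding to $s$, no path $P_i^{x,s}$ exists, so any $G_x$-to-$H_s$ path must either enter $H_s$ through $P_s$ (at $v^s$ via the neighbor $w^s$) or go through some other clause gadget and then reach $H_s$. In the first case the path decomposes as (piece inside $G_x$) $+$ (path from $G_x$ to $p$) $+ P_s$; the shortest way to reach $p$ from $G_x$ uses a clause $c'$ containing $x$ and has length at least $\min(4d + a_x - 1, 5d - a_x) + 2d \ge 6d$, yielding total length $\ge 8d$, and forcing $P_s$ to be a subpath. In the second case, the path goes $G_x \to H_{c'} \to H_s$ through another clause $c'$, which by (i) and (iii) already uses at least $3d + 4d = 7d$ just to traverse $H_{c'}$ and $P_{s}$, plus $\ge 3d$ to reach $H_{c'}$ from $G_x$, and is therefore even longer. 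Finally, for (ii), any $G_x$-to-$G_y$ path must traverse at least two of the connecting paths $P_i^{t,\ell}$ (one leaving $G_x$, one entering $G_y$); enumerating the four combinations inside a shared half-gadget (the minimum being $\mathsf{lgt}(P_1^{x,c}) + \mathsf{lgt}(P_1^{y,c}) = 8d - a_x - a_y \ge 6d$), and noting that routing through $p$ costs $\ge 2 \cdot 6d = 12d$, yields the desired bound.

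The only real obstacle is bookkeeping: the asymmetry of the path lengths (some depending on $a_x$, some on $a_x - 1$, some on $d - a_x$) makes it easy to overlook a case, so I would organize the enumeration by first fixing which of $\{c,\overline c\}$ is $s$, then which endpoint $u_1^x$ or $u_2^x$ of $G_x$ is used, and only then comparing lengths. Once all four direct lengths are tabulated, the bounds $3d$, $6d$, and $8d$ follow mechanically from $1 \le a_x \le d$.
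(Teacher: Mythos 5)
Your overall plan coincides with the paper's: observe that every connector path $P_i^{x,\ell}$ has length at least $3d$, read off (ii) and (iii), and then prove (i) and (iv) by comparing the route through the hub $p$ against all alternatives. Items (i)--(iii) are fine. The gap is in item (iv), in your second case. A path from $G_x$ to $H_s$ that avoids $P_s$ must enter $H_s$ through one of the connector paths $P_i^{z,s}$, i.e.\ it arrives from a variable gadget $G_z$ with $z$ in the clause of $s$ and $z \neq x$ --- not, as you put it, merely ``through some other clause gadget and then to $H_s$.'' Your accounting for that case, ``$3d+4d=7d$ to traverse $H_{c'}$ and $P_s$, plus $\geq 3d$,'' does not work under either reading: if the $H_{c'}$-to-$H_s$ leg is charged $4d$ via item (i), that leg runs through $p$ and therefore \emph{contains} $P_s$, so the path belongs to your first case; and if the honest lower bound for the case is $3d + 4d = 7d$, that is below $8d$ and rules out nothing. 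The extra ``$+3d$'' appears to double-count the $G_x \to H_{c'}$ leg. The correct treatment of the genuine alternative is to write the path as a $G_x$-to-$G_z$ segment followed by $P_i^{z,s}$ and invoke item (ii): $\mathsf{lgt}(P) \geq 6d + 3d = 9d$.

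A second, smaller omission: to conclude that \emph{every} shortest path from $G_x$ to $H_s$ contains $P_s$, you also need an upper bound showing that some $P_s$-containing path has length strictly below $9d$ (the paper computes $\mathsf{lgt}(P) \leq 4d + \min\{d + \mathsf{lgt}(P_1^{x,c}), \mathsf{lgt}(P_2^{x,c})\} < 9d$ for a clause $c$ containing $x$). Your case~1 only gives the lower bound $\geq 8d$ for such paths; without the $<9d$ upper bound, the comparison with the $\geq 9d$ alternatives cannot be closed. With these two repairs, item (iv) goes through exactly as in the paper.
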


\begin{claimproof}
Observe first that since any path $P$ from $G_x$ to a clause or another variable gadget contains, as a subpath, some path $P_i^{x,\ell}$, where $i \in [2]$ and $\ell \in \{c,\overline{c}\}$ for some clause $c$ containing $x$, then 
$$\mathsf{lgt}(P) \geq \mathsf{lgt}(P_i^{x,\ell})\geq 3d.$$
This implies, in particular, that
\begin{equation}
\label{eq:dist}
\min_{c \in C} d(V(G_x),V(G_c)) \geq 3d
\end{equation}
and 
$$d(V(G_x),V(G_y)) \geq \min_{c \in C} d(V(G_x),V(G_c)) + \min_{c \in C} d(V(G_y),V(G_c)) \geq 6d.$$
Thus, items (ii) and (iii) hold true.
To prove item (i), let $P$ be a shortest path from $H_s$ to $H_t$.
Observe that $\mathsf{lgt}(P) \leq 4d$ since $P_s$ together with $P_t$ form a path from $H_s$ to $H_t$ of length $4d$.
If $P$ does not contain $P_s$ as a subpath, then there exist $i \in [2]$ and a variable $x$ contained in the clause corresponding to $s$
such that $P$ contains $P^{x,s}_i$ as a subpath.
But then,
$$\mathsf{lgt}(P) \geq \mathsf{lgt}(P^{x,s}_i) + d(V(G_x),V(H_t)) \geq 3d + 3d$$
by \Cref{eq:dist}, a contradiction to the above observation.
Thus, $P$ contains $P_s$ as a subpath
and we conclude, by symmetry, that $P$ also contains $P_t$ as a subpath.
Finally, to prove item (iv), let $P$ be a shortest path from $G_x$ to $H_s$, and let $c \in C$ be a clause containing $x$. 
Then, 
$$\mathsf{lgt}(P) \leq \mathsf{lgt}(P_s) + \mathsf{lgt}(P_c) + d(v^c,V(G_x)) \leq 4d + \min\{d + P_1^{x,c},P_2^{x,c}\} < 4d + 5d.$$
Now, if $P$ does not contain $P_s$ as a subpath, 
then there exist $i \in [2]$ and a variable $z \in X$ contained in the clause corresponding to $s$ 
such that $P$ contains $P_i^{z,s}$;
in particular, since $z \neq x$ by assumption,
$$\mathsf{lgt}(P) \geq \mathsf{lgt}(P_i^{z,s}) + d(V(G_z),V(G_x)) \geq 3d + 6d,$$
a contradiction to the above.
Thus, $P$ contains $P_s$ as a subpath, and so,
$$\mathsf{lgt}(P) = \mathsf{lgt}(P_s) + d(p,V(G_x)) = 2d + 2d + \min_{\ell \in \{c,\overline{c}~|~ c \in C\}} d(v^\ell,V(G_x)) \geq 4d + 4d,$$
which proves item (iv).
\end{claimproof}

\begin{claim}
\label{clm:positive}
For every clause $c = (x \leq a_x,y\leq a_y, z \leq a_z)$ and every $t \in \{x,y,z\}$, the following hold.
\begin{itemize}
\item[(i)] For every $i \in \{0,\ldots,d+1\}$, if $i \leq a_t$, then the shortest path from $v_i^t$ to $c$ contains $P^{t,c}_1$ as a subpath and has length $5d + i - a_t$. Otherwise, the shortest path from $v_i^t$ to $c$ contains $P^{t,c}_2$ as a subpath and has length $5d+1+a_t-i$.
\item[(ii)] For every $i \in \{0,\ldots,d+1\}$, if $i \leq a_t-1$, then the shortest path from $v_i^t$ to $v^c$ contains $P^{t,c}_1$ as a subpath and has length $5d+1+i-a_t$. Otherwise, the shortest path from $v_i^t$ to $v^c$ contains $P^{t,c}_2$ as a subpath and has length $5d+a_t-i$.
\item[(iii)] For every $i \in \{0,\ldots,d+1\}$, if $i \leq a_t-2$, then the shortest path from $v_i^t$ to $t_1^{t,c}$ contains $P^{t,c}_1$ as a subpath and has length $5d+4+i-a_t$. Otherwise, the shortest path from $v_i^t$ to $t_1^{t,c}$ contains $P^{t,c}_2[u_2^t,w^{t,c}]$ as a subpath and has length $5d+1+a_t-i$.
\end{itemize}
\end{claim}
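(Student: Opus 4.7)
The plan is to handle all three items in parallel by identifying, for each item, the two natural candidate shortest paths from $v_i^t$ to the target vertex, computing their lengths explicitly, and then ruling out every other possibility via \Cref{clm:distances}. First, I would observe that any path from $v_i^t$ leaving $G_t$ must exit through $u_1^t$ or $u_2^t$, since these are the only vertices of $G_t$ with external neighbours, and that $d_{G_t}(v_i^t,u_1^t)=i$ and $d_{G_t}(v_i^t,u_2^t)=d+1-i$ for every $i\in\{0,\ldots,d+1\}$ (these are optimal since $G_t$ is a cycle of length $2d+2$). This singles out two candidate routes: $\pi_1$, which goes via $u_1^t$, $P_1^{t,c}$, $b^c$, the $d$-edge backbone from $b^c$ to $c$, and then further into $H_c$ to reach the target; and $\pi_2$, which goes via $u_2^t$, $P_2^{t,c}$, and then either stops at $w^{t,c}$ and enters $W^{t,c}$ (in item~(iii)) or passes through $v^c$ and the central $K_{1,3}$ of $H_c$ (in items~(i) and~(ii)).

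A direct edge count gives exactly the formulas in the statement: the $+1$ in item~(ii) accounts for the edge $cv^c$ traversed by $\pi_1$ but not $\pi_2$, and the $+3$ in item~(iii) combines that same $+1$ with the two-edge detour through the centre of $W^{t,c}$ that $\pi_1$ must take but $\pi_2$ avoids by stopping at $w^{t,c}$. Comparing $|\pi_1|$ with $|\pi_2|$ in each item then reduces to a single arithmetic inequality, yielding precisely the thresholds $i\le a_t$, $i\le a_t-1$, and $i\le a_t-2$ respectively. Moreover, in every item the quantity $|\pi_2|-|\pi_1|$ is odd, so the two candidate lengths are never equal and no tie-breaking is required.

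The main obstacle will be ruling out every other route, for which I plan two cases. Case~(a): a path exits $G_t$ but detours through another variable or clause gadget before reaching $H_c$. Such a path must contain, as a subpath, some $P_j^{t,c'}$ with $c'\neq c$ (cost at least $3d$), and must then travel from the reached gadget back to $H_c$, incurring at least $4d$ further edges by items~(i), (iii), and~(iv) of \Cref{clm:distances}; the resulting total of at least $7d$ strictly exceeds the candidate lengths $|\pi_1|,|\pi_2|$, both bounded by roughly $6d$. Case~(b): a path routes through the central vertex $p$. Since $p$'s only non-leaf neighbours lie on the paths $P_\ell$, reaching $p$ from $v_i^t$ forces a traversal to some $v^\ell$ (cost at least $3d$ by \Cref{clm:distances}(iii)) followed by $P_\ell$ of length $2d$, so just reaching $p$ already costs at least $5d$; continuing along $P_c$ to $v^c$ adds another $2d$, again giving at least $7d$ before the final leg, which strictly exceeds $|\pi_1|$ and $|\pi_2|$. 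Combining the optimal choice between $\pi_1$ and $\pi_2$ with this case analysis then yields the conclusion in each of items~(i)--(iii).
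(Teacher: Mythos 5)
Your proposal is correct and follows essentially the same route as the paper's proof: identify the two candidate routes through $u_1^t$/$P_1^{t,c}$ and $u_2^t$/$P_2^{t,c}$, compute their lengths by direct edge counts (which match the stated formulas, including the $[u_2^t,w^{t,c}]$ truncation in item~(iii)), compare them to get the integer thresholds, and use \Cref{clm:distances} to show any path using a different connector $P_j^{t,\ell}$ costs at least $6d$--$7d$ and hence cannot be shortest. Your explicit treatment of routes through $p$ is subsumed by the connector-path case (any such route must first traverse some $P_j^{t,\ell}$ in full), and your sharper $7d$ bound versus the paper's $3d+3d$ changes nothing of substance.
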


\begin{claimproof}
Consider $i \in \{0,\ldots,d+1\}$.
Then, 
$$d(v_i^t,c) \leq d(v_i^t,u_1^t) + \mathsf{lgt}(P_1^{t,c}) + d(b^c,c) = i + 4d - a_t + d$$
and 
$$d(v_i^t,c) \leq d(v_i^t,u_2^t) + \mathsf{lgt}(P_2^{t,c}) + 1 = d - i + 1 + 4d + a_t - 1 + 1.$$
Now, if a shortest path $P$ from $v^i_t$ to $c$ does not contain $P_1^{t,c}$ nor $P_2^{t,c}$ as a subpath,
then $P$ necessarily contains, as a subpath, a path $P_j^{t,\ell}$ for some $j \in [2]$ and $\ell \in \{c',\overline{c}',\overline{c}\}$, 
where $c'$ is a clause containing $t$. 
In particular, 
$$\mathsf{lgt}(P) \geq \mathsf{lgt}(P_j^{t,\ell}) + d(V(H_\ell),V(H_c)) \geq 3d + 3d.$$
But, $\min \{5d + i - a_t,5d+1+a_t-i\} < 6d$ as $a_t \in [d]$ and $i \in \{0,\ldots,d+1\}$, a contradiction to the above.
Thus, any shortest path from $v^i_t$ to $c$ contains $P_1^{t,c}$ or $P_2^{t,c}$ as a subpath, and so,
$$d(v_i^t,c) = \min \{5d + i - a_t,5d+1+a_t-i\}.$$
Hence, since $5d + i - a_t \leq 5d+1+a_t-i$ if and only if $i \leq a_t + 1/2$, item (i) follows.
Now, as above, we may argue that any shortest path from $v_i^t$ to $v^c$ contains $P^{t,c}_1$ or $P^{t,c}_2$ as a subpath, and so,
\begin{equation*}
\begin{split}
d(v_i^t,v^c) &= \min \{d(v_i^t,u_1^t) + \mathsf{lgt}(P_1^{t,c}) + d(b^c,v^c),d(v_i^t,u_2^t) + \mathsf{lgt}(P_2^{t,c})\}\\
&= \min \{i + 4d - a_t + d+1, d-i+1 + 4d+a_t-1\}.
\end{split}
\end{equation*}
Hence, since $5d+1+i-a_t \leq 5d+a_t-i$ if and only if $i \leq a_t - 1/2$, item (ii) follows.
Finally, by arguing as above, 
we can show that any shortest path from $v_i^t$ to $t^{t,c}_1$ contains $P^{t,c}_1$ or $P^{t,c}_2[u_2^t,w^{t,c}]$ as a subpath, and so,
\begin{equation*}
\begin{split}
d(v_i^t,t^{t,c}_1) &= \min \{d(v_i^t,u_1^t) + \mathsf{lgt}(P_1^{t,c}) + d(b^c,t^{t,c}_1),d(v_i^t,u_2^t) + \mathsf{lgt}(P_2^{t,c})[u_2^t,w^{t,c}] + 2\}\\
&= \min \{i + 4d - a_t + d +4, d- i + 1 + 4d + a_t - 2 + 2\}
\end{split}
\end{equation*}
Hence, since $5d + 4 + i - a_t \leq 5d + 1 + a_t -i$ if and only if $i \leq a_t - 3/2$, item (iii) follows.
\end{claimproof}

\begin{claim}
\label{clm:negative}
For every clause $c = (x \leq a_x,y\leq a_y, z \leq a_z)$ and every $t \in \{x,y,z\}$, the following hold.
\begin{itemize}
\item[(i)] For every $i \in \{0,\ldots,d+1\}$, if $i \leq a_t$, then the shortest path from $v_i^t$ to $\overline{c}$ contains $P^{t,\overline{c}}_1$ as a subpath and has length $5d+1+i-a_t$. Otherwise, the shortest path from $v_i^t$ to $\overline{c}$ contains $P^{t,\overline{c}}_2$ as a subpath and has length $5d+1+a_t-i$.
\item[(ii)] For every $i \in \{0,\ldots,d+1\}$, if $i \leq a_t+1$, then the shortest path from $v_i^t$ to $v^{\overline{c}}$ contains $P^{t,\overline{c}}_1$ as a subpath and has length $5d + i - a_t$. Otherwise, the shortest path from $v_i^t$ to $v^{\overline{c}}$ contains $P^{t,\overline{c}}_2$ as a subpath and has length $5d+2+a_t-i$.
\item[(iii)] For every $i \in \{0,\ldots,d+1\}$, if $i \leq a_t+2$, then the shortest path from $v_i^t$ to  $t_1^{t,\overline{c}}$ contains $P^{t,\overline{c}}_1[u_1^t,w^{t,\overline{c}}]$ as a subpath and has length $5d+1+i-a_t$. Otherwise, the shortest path from $v_i^t$ to $t_1^{t,\overline{c}}$ contains $P^{t,\overline{c}}_2$ as a subpath and has length $5d+5+a_t-i$.
\end{itemize}
\end{claim}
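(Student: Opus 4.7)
The plan is to mirror the proof of Claim \ref{clm:positive} almost verbatim, the only differences being the lengths of the connecting paths: $P^{t,\overline{c}}_1$ has length $5d-a_t$ (from $v^{\overline{c}}$ to $u_1^t$) and $P^{t,\overline{c}}_2$ has length $3d+a_t$ (from $b^{\overline{c}}$ to $u_2^t$), and the fact that the pendant claw $W^{t,\overline{c}}$ is attached on $P^{t,\overline{c}}_1$ (at the neighbour $w^{t,\overline{c}}$ of $v^{\overline{c}}$) rather than on $P^{t,\overline{c}}_2$. Thus, for each of the three items, I will first upper-bound $d(v_i^t,\cdot)$ by the two obvious candidate paths (one routed through $u_1^t$ and $P_1^{t,\overline{c}}$, the other through $u_2^t$ and $P_2^{t,\overline{c}}$), then rule out any other shortest path using Claim \ref{clm:distances}, and finally compare the two candidate lengths to determine which one wins depending on $i$ vs.\ $a_t$.

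Concretely, for item (i) the candidate through $u_1^t$ has length $i+(5d-a_t)+1=5d+1+i-a_t$ and the one through $u_2^t$ has length $(d+1-i)+(3d+a_t)+d=5d+1+a_t-i$; for item (ii), the corresponding lengths are $i+(5d-a_t)=5d+i-a_t$ and $(d+1-i)+(3d+a_t)+(d+1)=5d+2+a_t-i$; for item (iii), since $w^{t,\overline{c}}$ lies on $P^{t,\overline{c}}_1$ at distance $5d-1-a_t$ from $u_1^t$ and distance $1$ from $v^{\overline{c}}$, and since $t_1^{t,\overline{c}}$ is at distance $2$ from $w^{t,\overline{c}}$, the two lengths become $i+(5d-1-a_t)+2=5d+1+i-a_t$ and $(d+1-i)+(3d+a_t)+d+1+1+2=5d+5+a_t-i$. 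Solving the three elementary inequalities $5d+1+i-a_t\le 5d+1+a_t-i$, $5d+i-a_t\le 5d+2+a_t-i$, and $5d+1+i-a_t\le 5d+5+a_t-i$ yields the thresholds $i\le a_t$, $i\le a_t+1$, and $i\le a_t+2$, respectively, matching the three cases claimed.

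The one nontrivial step, as in Claim \ref{clm:positive}, is to argue that no shortest path from $v_i^t$ to the target vertex avoids both $P^{t,\overline{c}}_1$ and $P^{t,\overline{c}}_2$. For this, I will observe that any such detour must leave $G_x$ via some path $P_j^{t,\ell}$ with $\ell\in\{c',\overline{c}',c\}$ for a clause $c'\neq c$ containing $t$, and then reach $H_{\overline{c}}$; applying Claim \ref{clm:distances}(i) together with the lower bound $\mathsf{lgt}(P_j^{t,\ell})\ge 3d$ shows such a detour has length at least $3d+4d=7d$ (or at least $6d$ in the worst case after accounting for the small additive terms), which strictly exceeds the upper bounds $\le 5d+O(1)$ computed above, provided $d$ is large enough (say $d\ge 4$, which is a safe assumption on the {\sc NAE-Integer-3-Sat} instance). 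This is the one place where the numeric slack is tightest, particularly in item (iii) where the candidate bound $5d+5+a_t-i$ can be as large as $5d+5$; verifying the strict inequality uniformly in $i\in\{0,\dots,d+1\}$ and $a_t\in[d]$ is what I expect to be the main (minor) obstacle, and it will be dispatched by the same slack arguments used in Claim \ref{clm:positive}.

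Once the two candidate paths are certified to realize the shortest-path distance, the claim follows by simple case analysis on $i\le a_t$, $i\le a_t+1$, $i\le a_t+2$, as computed above.
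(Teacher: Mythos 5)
Your proposal is correct and takes essentially the same approach as the paper: the paper's proof of this claim likewise computes the two candidate routes through $u_1^t$ (via $P_1^{t,\overline{c}}$) and $u_2^t$ (via $P_2^{t,\overline{c}}$ concatenated with $P_{b^{\overline{c}}}$), rules out detours by deferring to the argument of the preceding claim together with the distance bounds, and compares the two lengths to get exactly the thresholds $i\le a_t$, $i\le a_t+1$, $i\le a_t+2$. The numeric slack you flag for item (iii) (needing roughly $d\ge 4$ so that $5d+3<6d$) is implicitly present in the paper's argument as well, so it is not an additional obstacle specific to your write-up.
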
 

\begin{claimproof}
As in the proof of \Cref{clm:positive}, we may argue that for every $i \in \{0,\ldots,d+1\}$, 
any shortest path from $v_i^t$ to $\overline{c}$ (or $v^{\overline{c}}$) contains $P^{t,\overline{c}}_1$ or $P^{t,\overline{c}}_2$ as a subpath.
Similarly, any shortest path from $v_i^t$ to $t_1^{t,\overline{c}}$ contains 
$P^{t,\overline{c}}_1[u_1^t,w^{t,\overline{c}}]$ or $P^{t,\overline{c}}_2$ as a subpath.
It follows that 
\begin{equation*}
\begin{split}
d(v_i^t,\overline{c}) &=\min \{d(v_i^t,u_1^t) + \mathsf{lgt}(P^{t,\overline{c}}_1) + 1,
d(v_i^t,u_2^t) + \mathsf{lgt}(P^{t,\overline{c}}_2) + \mathsf{lgt}(P_{b^{\overline{c}}})\}\\
&= \min\{i + 5d - a_t + 1, d-i+1+3d+a_t+d\}.
\end{split}
\end{equation*}
Hence, since $5d+1+i-a_t \leq 5d + 1+a_t - i$ if and only if $i \leq a_t$, item (i) follows.
Similarly, 
\begin{equation*}
\begin{split}
d(v_i^t,v^{\overline{c}}) &=\min \{d(v_i^t,u_1^t) + \mathsf{lgt}(P^{t,\overline{c}}_1),
d(v_i^t,u_2^t) + \mathsf{lgt}(P^{t,\overline{c}}_2) + \mathsf{lgt}(P_{b^{\overline{c}}}) +1\}\\
&= \min\{i + 5d - a_t, d-i+1+3d+a_t+d+1\}.
\end{split}
\end{equation*}
Hence, since $5d+i-a_t \leq 5d+2+a_t -i$ if and only if $i \leq a_t+1$, item (ii) follows.
Finally, letting $\widetilde{P}_2^{t,\overline{c}}$ be the path obtained by concatenating $P_2^{t,\overline{c}}$ and $P_{b^{\overline{c}}}$ (so that the next equation fits in one line), then
\begin{equation*}
\begin{split}
d(v_i^t,t^{t,\overline{c}}_1)& = \min \{d(v^t_i,u_1^t) + \mathsf{lgt}(P_1^{t,\overline{c}}[u_1^t,w^{t,\overline{c}}]) + d(w^{t,\overline{c}},t^{t,\overline{c}}_1), d(v^t_i,u_2^t) + \mathsf{lgt}(\widetilde{P}_2^{t,\overline{c}}) + d(\overline{c},t^{t,\overline{c}}_1)\}\\
&= \min \{i + 5d - a_t - 1 + 2, d - i + 1 + 3d + a_t +d + 4\}.
\end{split}
\end{equation*}
Hence, since $5d + 1 + i - a_t \leq  5d + 5 + a_t - i$ if and only if $i \leq a_t + 2$, item (iii) follows.
\end{claimproof}

\subsection{Correctness of the reduction}
\label{subsec:correctness}

The aim of this section is to prove that $(G,k)$ is a \yes-instance for \mdfull\ if and only if $(X,C,d)$ is a \yes-instance for {\sc NAE-Integer-3-Sat}. We prove the two directions in two separate lemmas.

\begin{lemma}
If $(G,k)$ is a \yes-instance for \mdfull\ then $(X,C,d)$ is a \yes-instance for {\sc NAE-Integer-3-Sat}.
\end{lemma}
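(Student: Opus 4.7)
The plan is to take an arbitrary resolving set $S$ with $|S|=k$, use a tight budget argument to pin down its structure, extract a candidate assignment $\phi$ from $S$, and then invoke \Cref{clm:positive} and \Cref{clm:negative} to verify the NAE condition for every clause.

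\textbf{Tight budget.} First, I would assemble disjoint lower bounds on $|S|$ whose sum equals $k$. Applying \Cref{obs:twins} to the two leaves of each of the $8|C|$ pendant claws (all false twin pairs), to the $2|C|$ pairs $\{p_1^c,p_2^c\}$ and $\{p_1^{\overline{c}},p_2^{\overline{c}}\}$ for $c\in C$, and to the leaves $\{t_1,t_2\}$ of the path $P$ yields $|S|\ge 8|C|+2|C|+1$. For every $x\in X$, the vertices $v_i^x$ and $w_i^x$ are symmetric in $G_x$ across the axis through the anchors $u_1^x,u_2^x$, so they are equidistant from each of $u_1^x,u_2^x$, and every vertex outside $G_x$ reaches them only through these anchors; hence $(v_i^x,w_i^x)$ cannot be resolved by any vertex of $V(G)\setminus\{v_1^x,\ldots,v_d^x,w_1^x,\ldots,w_d^x\}$. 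Thus $S$ contains at least one vertex from $\{v_1^x,\ldots,v_d^x,w_1^x,\ldots,w_d^x\}$, contributing $\ge |X|$ more. The sum is exactly $k$, so every bound is tight. In particular, $S$ contains a unique $s_x\in\{v_1^x,\ldots,v_d^x,w_1^x,\ldots,w_d^x\}$ for each $x\in X$, and I define the candidate assignment by $\phi(x):=i$, where $s_x\in\{v_i^x,w_i^x\}$.

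\textbf{NAE verification.} Fix a clause $c=(x\le a_x,y\le a_y,z\le a_z)$. Tightness yields $|S\cap\{p_1^c,p_2^c\}|=1$; by the false-twin symmetry I assume $p_2^c\notin S$, so some vertex of $S$ must resolve $(c,p_2^c)$. I would then argue, by a case analysis leveraging \Cref{clm:distances}(i)--(iv) to forbid shortcuts through $p$, through other clauses, or through unrelated variable cycles, that every vertex of $S$ other than $s_x,s_y,s_z$ is equidistant from $c$ and $p_2^c$: each such vertex's shortest path into $G_c$ must enter through $v^c$, and $c$ and $p_2^c$ are both neighbours of $v^c$. Hence the distinguisher is some $s_t$ with $t\in\{x,y,z\}$; by \Cref{clm:positive}(i) for $d(s_t,c)$ and $d(s_t,p_2^c)=d(s_t,v^c)+1$ via \Cref{clm:positive}(ii), the two distances differ precisely when $\phi(t)\le a_t$. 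So at least one inequality of $c$ is satisfied by $\phi$. A symmetric argument applied to the pair $(\overline{c},p_j^{\overline{c}})$ with $p_j^{\overline{c}}\notin S$, using \Cref{clm:negative} in place of \Cref{clm:positive}, yields some $t'\in\{x,y,z\}$ with $\phi(t')>a_{t'}$, so at least one inequality of $c$ is violated. Since $c$ was arbitrary, $\phi$ NAE-satisfies $(X,C,d)$.

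\textbf{Main obstacle.} The technical heart is the equidistance case analysis at the start of the NAE verification. I have to enumerate every type of vertex lying outside $\{s_x,s_y,s_z\}$---vertices of $G_{x'}$ for $x'\notin\{x,y,z\}$, of another clause gadget, of a pendant claw outside $G_c$, of the paths $P_{b^c}$, $P_c$, and of the central path $P$---and verify via \Cref{clm:distances} that no shorter route to $G_c$ bypasses $v^c$ and thereby splits the pair $(c,p_2^c)$. Moreover, the asymmetric placement of pendant claws ($W^{t,c}$ on $P_2^{t,c}$ versus $W^{t,\overline{c}}$ on $P_1^{t,\overline{c}}$) breaks the naive left--right symmetry of the gadget, so the $c$-side and $\overline{c}$-side require separate though analogous handling; similar vigilance is needed when the tight budget leaves $p_1^c$ (rather than $p_2^c$) outside $S$.
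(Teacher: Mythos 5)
Your proposal is correct and follows essentially the same route as the paper's proof: the same tight budget count ($10|C|+|X|+1$) via \Cref{obs:twins} and the $v_i^x/w_i^x$ symmetry, the same extraction of the assignment from the unique cycle vertex in $S$, the same reduction to showing that every vertex of $S$ outside $V(G_x)\cup V(G_y)\cup V(G_z)$ reaches $\ell\in\{c,\overline{c}\}$ through $v^\ell$ and hence cannot separate $\ell$ from $p_2^\ell$, and the same use of \Cref{clm:positive}(i)--(ii) and \Cref{clm:negative}(i)--(ii) to translate ``resolves the pair'' into $\phi(t)\le a_t$ and $\phi(t')>a_{t'}$ respectively. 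The case analysis you flag as the main obstacle is exactly the one the paper carries out using \Cref{clm:distances}(i) and (iv).
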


\begin{proof}
Assume that $(G,k)$ is a \yes-instance for {\sc Metric Dimension} and let $S$ be a resolving set of size at most $k$. 
By Observation~\ref{obs:twins}, for any clause $c \in C$,
\begin{equation}
\label{eq1}
|S \cap \{p_1^c,p_2^c\}| \geq 1,~|S \cap \{p_1^{\overline{c}},p_2^{\overline{c}}\}| \geq 1,~|S \cap \{t_1^c,t_2^c\}| \geq 1,\text{ and }|S \cap \{t_1^{\overline{c}},t_2^{\overline{c}}\}| \geq 1.
\end{equation}
Also, by Observation~\ref{obs:twins}, for any clause $c \in C$ and any variable $x \in X$ appearing in $c$,
\begin{equation}
\label{eq2}
|S \cap \{t^{x,c}_1,t^{x,c}_2\}| \geq 1 \text{ and } |S \cap \{t^{x,\overline{c}}_1,t^{x,\overline{c}}_2\}| \geq 1.
\end{equation}
For the same reason,
\begin{equation}
\label{eq3}
|S \cap  \{t_1,t_2\}| \geq 1.
\end{equation}
Consider now a variable $x \in X$.
Since any path from a vertex in $V(G) \setminus V(G_x)$ to a vertex in $\{v^x_i,w^x_i~|~i \in [d]\}$ contains $u^x_1$ or $u^x_2$, 
and, for any $i \in [d]$ and $u \in \{u^x_1,u^x_2\}$,  $d(u,v^x_i) = d(u,w^x_i)$, 
no vertex in $V(G) \setminus \{v^x_i,w^x_i~|~i \in [d]\}$ can resolve $v^x_i$ and $w^x_i$ for any $i \in [d]$. 
It follows that
\begin{equation}
\label{eq4}
|S \cap \{v^x_i,w^x_i~|~i \in [d]\}| \geq 1.
\end{equation}
Now, note that $S$ has size at most $k=|X|+10|C|+1$, and so, equality must in fact hold in every inequality of \Cref{eq1,eq2,eq3,eq4}. 
Without loss of generality, let us assume that $t_1 \in S$ and that, for every clause $c \in C$ and variable $x \in X$ appearing in $c$, we have that 
$p_1^c,p_1^{\overline{c}},t_1^c,t_1^{\overline{c}},t^{x,c}_1,t^{x,\overline{c}}_1 \in S$.

For every variable $x \in X$, assume, without loss of generality, 
that $S \cap \{v_i^x,w_i^x~|~i \in [d]\} = S \cap \{v_i^x|~i \in [d]\}$, 
and let $i_x \in [d]$ be the index of the vertex in $S \cap \{v_i^x~|~i \in [d]\}$. 
We contend that the assignment which sets each variable $x$ to $i_x$ satisfies every clause in $C$. 
Indeed, consider a clause $c = (x \leq a_x,y\leq a_y,z \leq a_z)$.
We first aim to show that, for every $w \in S \setminus \{V(G_x) \cup V(G_y) \cup V(G_z)\}$ and $\ell \in \{c,\overline{c}\}$, 
$d(w,\ell) = d(w,p_2^\ell)$.
Note that it suffices to show that any shortest path from $w \in S \setminus \{V(G_x) \cup V(G_y) \cup V(G_z)\}$ to $\ell \in \{c,\overline{c}\}$ contains $v^\ell$,  
as then $d(w,\ell) = d(w,v^\ell) + 1 = d(w,p_2^\ell)$.
Now, if $w \in V(G_t)$ for some $t \in \{c',\overline{c'}~|~c' \in C\}$ different from $\ell$, then this readily follows from \Cref{clm:distances}(i);
and if $w \in V(G_t)$ for some $t \in X \setminus \{x,y,z\}$, then this readily follows from \Cref{clm:distances}(iv).
If $w = t_1^{r,q}$ for some $r \in X$ and $q \in \{c',\overline{c'}~|~c' \in C\}$, 
then $d(t_1^{r,q},\ell) = d(t_1^{r,q},v^q) + d(v_q,\ell)$, and so,
by \Cref{clm:distances}(i), any path from $w$ to $\ell$ contains $v^\ell$.
Finally, if $w \in \{t^{c'}_1,t_1^{\overline{c}'}~|~c' \in C\}$, then clearly
any shortest path from $w$ to $\ell$ contains~$v^\ell$.

Since $S$ is a resolving set, it follows that, for every clause $c\in C$, there exist $t,f \in \{x,y,z\}$ such that 
$d(v_{i_t}^t,c) \neq d(v_{i_t}^t,p_2^c)$ and $d(v_{i_f}^f,\overline{c}) \neq d(v_{i_f}^f,p_2^{\overline{c}})$.
Now, by \Cref{clm:positive}(i) and (ii), if $i_t > a_t$, then
\begin{equation*}
d(v_{i_t}^t,c) = 5d + 1 + a_t - i_t = d(v_{i_t}^t,v^c) + 1 = d(v_{i_t}^t,p_2^c),
\end{equation*}
a contradiction to our assumption.
Therefore, $i_t \leq a_t$.
Similarly, if $i_f \leq a_f$, then by \Cref{clm:negative}(i) and (ii),
\begin{equation*}
d(v_{i_f}^f,\overline{c}) = 5d + 1 + i_f - a_f = d(v_{i_f}^f,v^{\overline{c}}) + 1 = d(v_{i_f}^f,p_2^{\overline{c}}),
\end{equation*}
a contradiction to our assumption.
Therefore, $i_f > a_f$, and so, 
the assignment constructed above indeed satisfies every clause in $C$ which concludes the proof.
\end{proof}

\begin{lemma}
If $(X,C,d)$ is a \yes-instance for {\sc NAE-Integer-3-Sat} then $(G,k)$ is a \yes-instance for \mdfull.
\end{lemma}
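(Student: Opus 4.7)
The plan is to exhibit an explicit resolving set $S$ of size exactly $k$ driven by a satisfying assignment $\sigma\colon X \to [d]$ for $(X,C,d)$. Concretely, set
\[
S = \{t_1\} \cup \{v_{\sigma(x)}^x : x \in X\} \cup \bigcup_{c \in C}\Bigl(\{p_1^c,p_1^{\overline{c}},t_1^c,t_1^{\overline{c}}\} \cup \bigcup_{t \in \mathrm{vars}(c)}\{t_1^{t,c},t_1^{t,\overline{c}}\}\Bigr),
\]
so that $|S| = 1 + |X| + 4|C| + 6|C| = k$. What remains is to verify that every pair of distinct vertices of $G$ is resolved by some element of $S$.

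I would organise the verification by the rigid piece to which each vertex belongs: a variable cycle $G_x$, the core of a clause gadget $G_c$, a linking path $P_i^{t,\ell}$ or $P_\ell$ together with its pendant claw, or the central path $P = t_1pt_2$. For \emph{internal} pairs lying in a single such piece, the resolver is the nearest selected vertex of $S$: twin leaves of a claw are split by the selected leaf itself; the twin pairs $(v_i^x, w_i^x)$ inside the cycle $G_x$ are split by the selector $v_{\sigma(x)}^x$ via a short two-direction walk computation around the cycle; non-twin pairs along a path or cycle are split because distance to a fixed pendant-claw leaf $t_1^\star \in S$ varies strictly monotonically along the path. For \emph{cross} pairs whose two vertices lie in different pieces, \Cref{clm:distances} forces the shortest paths from a suitably chosen $t_1^\star$ to pass through only one of the two pieces, yielding distinct distance vectors.

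The heart of the argument, and the only place where $\sigma$ actually plays a role, is the two families
\[
(c, p_2^c) \quad \text{and} \quad (\overline{c}, p_2^{\overline{c}}), \qquad c \in C.
\]
As seen in the forward direction, any $w \in S$ outside $V(G_x)\cup V(G_y)\cup V(G_z)$ satisfies $d(w,c)=d(w,p_2^c)$, because every shortest path from such a $w$ to $c$ is forced through $v^c$ (the common neighbor of $c$ and $p_2^c$), and analogously for $\overline{c}$. Hence $(c,p_2^c)$ can only be resolved by some selector $v_{\sigma(t)}^t$ with $t \in \{x,y,z\}$, and \Cref{clm:positive}(i)--(ii) shows that $v_{\sigma(t)}^t$ does so exactly when $\sigma(t) \leq a_t$ (so that the shortest path from $v_{\sigma(t)}^t$ to $c$ traverses $P_1^{t,c}$ rather than going through $v^c$). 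Symmetrically, \Cref{clm:negative}(i)--(ii) gives that $v_{\sigma(f)}^f$ resolves $(\overline{c}, p_2^{\overline{c}})$ iff $\sigma(f) > a_f$. Since $\sigma$ NAE-satisfies $c$, both a ``true'' index $t$ and a ``false'' index $f$ exist among $\{x,y,z\}$, so both critical pairs are resolved.

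The main obstacle I expect is bookkeeping rather than a single conceptual step: the shortcut through $p$ created by the paths $P_c, P_{\overline{c}}$, the three-fold symmetry across $\{x,y,z\}$ in each clause, and the cycle symmetry $v_i^x \leftrightarrow w_i^x$ create many candidate cross-pair collisions that must be individually ruled out. This case check is long but mechanical: for each such pair one combines \Cref{clm:distances,clm:positive,clm:negative} to produce a concrete $t_1^\star$-type vertex (or a selector) whose distance separates the two vertices.
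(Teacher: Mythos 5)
Your construction of $S$ is exactly the paper's (the paper uses $\phi$ where you use $\sigma$), your size count matches $k=|X|+10|C|+1$, and your identification of the heart of the argument --- that every $w\in S$ outside $V(G_x)\cup V(G_y)\cup V(G_z)$ sees $c$ and $p_2^c$ at equal distance, so the pairs $(c,p_2^c)$ and $(\overline{c},p_2^{\overline{c}})$ must be resolved by the selectors $v_{\sigma(t)}^t$ via \Cref{clm:positive}(i)--(ii) and \Cref{clm:negative}(i)--(ii), which happens precisely when the assignment NAE-satisfies $c$ --- is the same as the paper's Case~2. So the overall route is identical.

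One described step would fail if executed literally: you claim that non-twin pairs ``along a path or cycle'' are split because the distance to a fixed pendant-claw leaf $t_1^\star\in S$ ``varies strictly monotonically along the path.'' That is true along a tree path hanging off a single attachment point, but it is false around the variable cycle $G_x$ and along the connecting paths $P_i^{x,\ell}$, because every landmark in $S$ reaches $G_x$ through the two anchors $u_1^x,u_2^x$, so its distance to $v_i^x$ is a \emph{minimum of two linear functions} of $i$ (one increasing, one decreasing), not a monotone function. Concretely, by \Cref{clm:positive}(ii), $d(p_1^c,v_i^x)=\min\{5d+2+i-a_x,\,5d+1+a_x-i\}$, so the pair $v_i^x,v_j^x$ with $i<a_x\le j$ and $i+j+1=2a_x$ collides at $p_1^c$; analogous ties occur between a cycle vertex and a vertex of $P_i^{x,\ell}$. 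The paper's Case~3 resolves these ties not by monotonicity but by pairing two landmarks whose breakpoints around the cycle are deliberately offset: $p_1^c$ (breakpoint at $a_x-1$, \Cref{clm:positive}(ii)) versus $t_1^{x,c}$ (breakpoint at $a_x-2$, \Cref{clm:positive}(iii)), and symmetrically $p_1^{\overline{c}}$ versus $t_1^{x,\overline{c}}$ via \Cref{clm:negative}(ii)--(iii). Any pair tied for one landmark is separated by the other. Since you already cite exactly the claims containing these offsets, the gap is repairable within your framework, but the tie-breaking mechanism has to be made explicit --- it is the actual reason the pendant claws $W^{t,\ell}$ are attached one step inside the paths $P^{t,c}_2$ and $P^{t,\overline{c}}_1$ rather than anywhere else.
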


\begin{proof}
Assume that $(X,C,d)$ is satisfiable and 
let $\phi: X \to \{1,\ldots,d\}$ be an assignment of the variables satisfying every clause in $C$. 
We construct a resolving set $S$ of $G$ as follows. 
First, we add $t_1$ to $S$. 
For every variable $x \in X$, we add $v_{\phi(x)}^x$ to $S$. 
Finally, for every clause $c \in C$, we add $p_1^c,p_1^{\overline{c}},t_1^c,t_1^{\overline{c}}$ to $S$ 
and further add, for every variable $t$ appearing in $c$, $t_1^{t,c},t_1^{t,\overline{c}}$ to $S$.
Note that $|S|=k$ and that every vertex of $S$ is distinguished by itself.
Let us show that $S$ is indeed a resolving set of $G$.
Consider two distinct vertices $u,v \in V(G)$.
To prove that $S$ resolves the pair $u,v$, we distinguish the following cases.\\

\noindent
\textbf{Case 1.} \emph{At least one of $u$ and $v$ belongs to a pendant claw.}
Without loss of generality, assume first that $u \in V(W^\ell)$, where $\ell \in \{c,\overline{c}~|~c\in C\}$.
If $v \in V(G) \setminus V(W^\ell)$, then $d(t^\ell_1,v) > 2 \geq d(t^\ell_1,u)$.
Suppose therefore that $v \in V(W^\ell)$ as well.
If $\{u,v\} \neq \{w^\ell,t^\ell_2\}$, then $d(t^\ell_1,u) \neq d(t^\ell_1,v)$.
If $\{u,v\} = \{w^\ell,t^\ell_2\}$, then $d(t_1,w^\ell) = 2 < 4 = d(t_1, t^\ell_2)$.
Second, assume that $u \in V(W^{t,\ell})$, where $\ell \in \{c,\overline{c}\}$ for some clause $c\in C$ and $t$ is a variable appearing in clause $c$.
If $v \in V(G) \setminus V(W^{t,\ell})$, then $d(t^{t,\ell}_1,v) > 2 \geq d(t^{t,\ell}_1,u)$.
Suppose therefore that $v \in V(W^{t,\ell})$ as well.
If $\{u,v\} \neq \{w^{t,\ell},t^{t,\ell}_2\}$, then $d(t^{t,\ell}_1,u) \neq d(t^{t,\ell}_1,v)$.
If $\{u,v\} = \{w^{t,\ell},t^{t,\ell}_2\}$, then $d(p^\ell_1,w^{t,\ell}) = 2 < 4 = d(p^\ell_1, t^{t,\ell}_2)$.\\

\noindent
\textbf{Case 2.} \emph{At least one of $u$ and $v$ belongs to the core of a clause gadget.}
Assume, without loss of generality, that $u \in \{\ell,v^\ell,p_1^\ell,p_2^\ell\}$, where $\ell \in \{c,\overline{c}\}$
for some clause $c = (x \leq a_x,y \leq a_y,z \leq a_z)$.
If $v$ is not a neighbor of $v^\ell$, then $d(p_1^\ell,v) > 2 \geq d(p_1^\ell,u)$. 
If $v$ is the neighbor of $v^\ell$ on the path $P_\ell$, then $d(t_1,v) = d < d(t_1,u)$.
Also, $v = w^{t,\ell}$ was covered by the previous case. 
So, consider $v \in \{\ell,v^\ell,p_1^\ell,p_2^\ell\}$.
If $\{u,v\} \neq \{\ell,p_2^\ell\}$, then clearly $d(p_1^\ell,u) \neq d(p_1^\ell,v)$.
Assume therefore that $\{u,v\} = \{\ell,p_2^\ell\}$.
Since $\phi$ satisfies $c$, there exist $t,f \in \{x,y,z\}$ such that $\phi(t) \leq a_t$ and $\phi(f) > a_f$.
Then, either $\phi(t) < a_t$, in which case, by \Cref{clm:positive}(i) and (ii), 
\begin{equation*}
d(v_{\phi(t)}^t,c) = 5d + \phi(t) - a_t < 5d + \phi(t) - a_t + 2 = d(v_{\phi(t)}^t,v^c) + 1 = d(v_{\phi(t)}^t,p_2^c),
\end{equation*}
or $\phi(t) = a_t$, in which case, by \Cref{clm:positive}(i) and (ii),
\begin{equation*}
d(v_{\phi(t)}^t,c) = 5d < 5d + 1 = d(v_{\phi(t)}^t,v^c) + 1 = d(v_{\phi(t)}^t,p_2^c).
\end{equation*}
Similarly, either $\phi(f) = a_f + 1$, in which case, by \Cref{clm:negative}(i) and (ii),
\begin{equation*}
d(v_{\phi(f)}^f,\overline{c}) = 5d < 5d + 2 = d(v_{\phi(f)}^f,v^{\overline{c}}) + 1 = d(v_{\phi(f)}^f,p_2^{\overline{c}}),
\end{equation*}
or $\phi(f) > a_f+1$, in which case, by \Cref{clm:negative}(i) and (ii),
\begin{equation*}
d(v_{\phi(f)}^f,\overline{c}) = 5d + 1+ a_f - \phi(f) < 5d + 3 + a_f - \phi(f) 
= d(v_{\phi(f)}^f,v^{\overline{c}}) + 1 = d(v_{\phi(f)}^f,p_2^{\overline{c}}).
\end{equation*}
In all cases, we conclude that there exists $w \in S$ such that $d(w,\ell) \neq d(w,p_2^\ell)$.\\

\noindent
\textbf{Case 3.} \emph{At least one of $u$ and $v$ belongs to a variable gadget.}
Assume, without loss of generality, that $u \in V(G_x)$ for some variable $x \in X$.
By the previous cases, we may assume that $v$ does not belong to the core of a clause gadget or a pendant claw.
If $v \in V(G_y)$ for some variable $y \neq x$, then by \Cref{clm:distances}(ii), 
\begin{equation*}
d(v^x_{\phi(x)},u) \leq d + 1 < 6d \leq d(V(G_x),V(G_y)) \leq d(v^x_{\phi(x)},v).
\end{equation*}
Now, suppose that $v \in V(G_x)$ as well.
If $\{u,v\} = \{v^x_i,w^x_i\}$ for some $i \in [d]$, then 
\begin{equation*}
d(v^x_{\phi(x)},v^x_i) = |\phi(x) - i| < d(v^x_{\phi(x)},w^x_i) = \min \{\phi(x)+i,2d+2-\phi(x)-i\}.
\end{equation*}
Suppose next that $u = v^x_i$ and $v = v^x_j$ for two distinct $i,j \in \{0,\ldots,d+1\}$, say $i < j$, without loss of generality
Consider a clause $c = (x \leq a_x,y\leq a_y,z \leq a_z)$ containing $x$.
If $j < a_x$, then by \Cref{clm:positive}(ii),
\begin{equation*}
d(p_1^c,v^x_i) = d(v^c,v^x_i) + 1 = 5d + 2 + i - a_x < 5d + 2 + j - a_x = d(v^c,v^x_j) + 1 = d(p_1^c,v^x_j).
\end{equation*}
Now, suppose that $i < a_x \leq j$. 
Then, by \Cref{clm:positive}(ii),
\begin{equation*}
d(p_1^c,v^x_i) - d(p_1^c,v^x_j) = 5d + 2 + i - a_x - (5d + a_x - j + 1) = i + j + 1 - 2a_x.
\end{equation*}
Thus, if $i + j + 1 - 2a_x \neq 0$, then $d(p_1^c,v^x_i) \neq d(p_1^c,v^x_j)$.
Now, if $i+j+1-2a_x=0$, then either $j = a_x$ and $i = a_x -1$, in which case, by \Cref{clm:positive}(iii),
\begin{equation*}
d(t_1^{x,c},v^x_i) = 5d + 2 > 5d + 1 = d(t_1^{x,c},v^x_j),
\end{equation*}
or $j > a_x$ and $i < a_x-1$, in which case, by \Cref{clm:positive}(iii),
\begin{equation*}
d(t_1^{x,c},v^x_j) = 5d + 1 + a_x - j = 5d + 2 + i - a_x < 5d + 4 + i - a_x = d(t_1^{x,c},v^x_i).
\end{equation*}
Finally, if $a_x \leq i < j$, then by \Cref{clm:positive}(ii),
\begin{equation*}
d(p_1^c,v^x_j) = 5d + 1 + a_x - j < 5d + 1 + a_x - i = d(p_1^c,v^x_i).
\end{equation*}
Since for any $t \in V(G) \setminus V(G_x)$ and $k \in [d]$, $d(t,v^x_k) = d(t,w^x_k)$,
we conclude similarly if either $u = v^x_i$ and $v = w^x_j$, or $u = w^x_i$ and $v = w^x_j$ for two distinct $i,j \in \{0,\ldots,d+1\}$.

Assume, henceforth, that $v \notin \bigcup_{x \in X} V(G_x)$.
If $v$ does not belong to a path connecting $G_x$ to some clause gadget, then
\begin{equation*}
d(v^x_{\phi(x)},v) \geq \min_{c \in C} d(V(G_x),V(G_c)) \geq 3d > d+1 \geq d(v^x_{\phi(x)},u)
\end{equation*}
by \Cref{clm:distances}(iii) and (iv).
Suppose therefore that $v \in V(P_i^{x,\ell})$, where $i \in \{1,2\}$ and $\ell \in \{c,\overline{c}\}$
for some clause $c= (x \leq a_x,y\leq a_y,z \leq a_z)$ containing $x$.
Without loss of generality, let us assume that $u = v_j^x$ where $j \in \{0,\ldots,d+1\}$.

Assume first that $\ell = c$ and $i=1$.
Let $P_1^{x,c} = z_0 \ldots z_{4d - a_x}$, where $z_0 = b^c$ and $z_{4d-a_x} = u^x_1$.
Let $v = z_k$, where $k \in [4d - a_x -1]$. 
If $j \leq a_x -1$, then by \Cref{clm:positive}(ii), the shortest path from $p_1^c$ to $v_j^x$ contains $P_1^{x,c}$ as a subpath, 
which implies in particular that $d(p_1^c,v) < d(p_1^c,u)$.
Suppose therefore that $j \geq a_x$.
Then, by \Cref{clm:positive}(ii), 
\begin{equation*}
d(p_1^c,u) - d(p_1^c,v) = 5d+1 + a_x - j - (d+k +2).
\end{equation*}
Thus, if $5d+1 + a_x - j - (d+k +2) \neq 0$, then $d(p_1^c,u) \neq d(p_1^c,v)$.
Now, if $5d+1 + a_x - j - (d+k +2) = 0$, then by \Cref{clm:positive}(iii),
\begin{equation*}
d(t_1^{x,c},u) = 5d+1+a_x-j = d + k + 2 < d + k + 4 = d(t^{x,c}_1,v).
\end{equation*}

Second, assume that $\ell = c$ and $i=2$.
Let $P_2^{x,c} = z_0 \ldots z_{4d + a_x -1}$, where $z_0 = v^c$ and $z_{4d+a_x-1} = u^x_2$.
Let $v = z_k$, where $k \in [4d + a_x -2]$
(note that since $v$ does not belong to the core of a clause gadget or a pendant claw by assumption, in fact $k \geq 2$). 
If $j \geq a_x$, then by \Cref{clm:positive}(ii), the shortest path from $p_1^c$ to $u$ contains $P_2^{x,c}$ as a subpath, 
which implies in particular that $d(p_1^c,v) < d(p_1^c,u)$.
Otherwise, $j \leq a_x -1$, in which case
\begin{equation*}
d(p_1^c,u) - d(p_1^c,v) = 5d+2+j-a_x - (k + 1).
\end{equation*}
Thus, if $5d+2+j-a_x - (k + 1)\neq 0$, then $d(p_1^c,u) \neq d(p_1^c,v)$.
Now, if $5d+2+j-a_x - (k + 1) = 0$, then $j < a_x -1$ since $k < 5d$, and so, by \Cref{clm:positive}(iii),
\begin{equation*}
d(t_1^{x,c},u) = 5d + 4 + j - a_x = k + 3 > k + 1 = d(t^{x,c}_1,v).
\end{equation*}

Third, assume that $\ell = \overline{c}$ and $i =1$.
Let $P_1^{x,\overline{c}} = z_0 \ldots z_{5d - a_x}$, where $z_0 = v^{\overline{c}}$ and $z_{5d-a_x} = u^x_1$.
Let $v = z_k$, where $k \in [5d - a_x -1]$
(note that since $v$ does not belong to the core of a clause gadget or a pendant claw by assumption, in fact $k \geq 2$).
If $j \leq a_x+1$, then by \Cref{clm:negative}(ii), the shortest path from $p_1^{\overline{c}}$ to $v_j^x$ contains 
$P_1^{x,\overline{c}}$ as a subpath which implies in particular that $d(p_1^{\overline{c}},v) < d(p_1^{\overline{c}},u)$.
Suppose therefore that $j \geq a_x + 2$.
Then, by \Cref{clm:negative}(ii),
\begin{equation*}
d(p_1^{\overline{c}},v^x_j) - d(p_1^{\overline{c}},z_k) = 5d + 3 + a_x - j - (k+1).
\end{equation*}
Thus, if $5d + 3 + a_x - j - (k +1) \neq 0$, then $d(p_1^{\overline{c}},v^x_j) \neq d(p_1^{\overline{c}},z_k)$.
Now, if $5d + 3 + a_x - j - (k +1) = 0$, then $j > a_x +2$ since $k < 5d$, and so, by \Cref{clm:negative}(iii),
\begin{equation*}
d(t^{x,\overline{c}}_1,v^x_j) = 5d + 5 + a_x - j = k + 3 > k + 1 = d(t^{x,\overline{c}}_1,z_k).
\end{equation*} 

Assume finally that $\ell = \overline{c}$ and $i = 2$.
Let $P_2^{x,\overline{c}} = z_0 \ldots z_{3d + a_x }$, where $z_0 = b^{\overline{c}}$ and $z_{3d+a_x} = u^x_2$.
Let $v = z_k$, where $k \in [3d + a_x -1]$. 
If $j \geq a_x +2$, then by \Cref{clm:negative}(ii), the shortest path from $p_1^{\overline{c}}$ to $u$ contains 
$P_2^{x,\overline{c}}$ as a subpath, which implies in particular that $d(p_1^{\overline{c}},v) < d(p_1^{\overline{c}},u)$.
Suppose therefore that $j \leq a_x+1$.
Then, by \Cref{clm:negative}(ii),
\begin{equation*}
d(p_1^{\overline{c}},v^x_j) - d(p_1^{\overline{c}},z_k) = 5d +1+j - a_x -(d+k+2).
\end{equation*} 
Thus, if $5d +1+j - a_x -(d+k+2) \neq 0$, then $d(p_1^{\overline{c}},v^x_j) \neq d(p_1^{\overline{c}},z_k)$.
Now, if $5d +1+j - a_x -(d+k+2) = 0$, then $j < a_x + 1$ since $k < 4d$, and so, by \Cref{clm:negative}(iii),
\begin{equation*}
d(t^{x,\overline{c}}_1,v^x_j) = 5d + 1 + j - a_x = d+k + 2 < d+k + 4 = d(t^{x,\overline{c}}_1,z_k).
\end{equation*} 
In all the subcases, we conclude that there exists $w \in S$ such that $d(w,u) \neq d(w,v)$.\\

\noindent
\textbf{Case 4.} \emph{None of the above}.
First, note that $p$ is distinguished by $S$ since it is the unique vertex of $G$ at distance $1$ from $t_1$. Second, $t_2$ is distinguished by $S$ since it is the unique vertex of $G$ at distance $2$ from $t_1$ and distance $4$ from $t_1^{c}$ and $t_1^{\overline{c}}$ for all $c\in C$. Thus, in this last case, we can assume that both $u$ and $v$ belong either to paths connecting gadgets 
or to some path $P_{b^\ell}$, where $\ell \in \{c,\overline{c}~|~c \in C\}$.
Assume first that $u \in V(P_\ell)$ for some $\ell \in \{c,\overline{c}~|~c \in C\}$.
If $v \in V(P_\ell)$ as well, then surely $d(t_1,u) \neq d(t_1,v)$.
If $v \in V(P_q)$ for some $q \in \{c,\overline{c}~|~c \in C\}$ different from $\ell$,
then $d(p_1^\ell,u) < d(p_1^\ell,v)$ since the unique shortest path from $p_1^\ell$ to $v$ contains $P_\ell$ as a subpath.
Finally, if there exists $q \in \{c,\overline{c}~|~c \in C\}$ such that
$v$ belongs to $P_{b_q}$ or to some path connecting $H_q$ to a variable gadget,
then $d(t_1,v) > d(t_1,v^q) \geq d(t_1,u)$. 

Second, assume that $u \in V(P_i^{x,\ell})$, where $i \in [2]$ and $\ell \in \{c,\overline{c}\}$ for some clause $c$ containing variable $x$.
Note that by the previous paragraph, we may assume that $v \notin \bigcup_{q \in C} V(P_q) \cup V(P_{\overline{q}})$.
Suppose first that $v \in V(P_j^{y,q})$, where $j \in [2]$ and $q \in \{c',\overline{c}'\}$ for some clause $c'$ containing variable $y$.
Note that  $d(t^\ell_1,u) = d(t_1,u)$.
So, if $q \neq \ell$, then either $d(t_1,u) \neq d(t_1,v)$, or
\begin{equation*}
d(t_1^\ell,v)- d(t^\ell_1,u) = d(t_1^\ell,p) + d(t_1,v) - 1 -d(t^\ell_1,u) = d(t_1,v) + 2 - d(t_1,u) = 2.
\end{equation*}
Thus, assume that $q = \ell$.
Suppose first that $x = y$.
If $i = j$, then surely $d(p_1^\ell, u) \neq d(p^\ell_1,v)$.
Otherwise, assume, without loss of generality, that $u$ belongs to the path containing $w^{x,\ell}$.
Note that  $d(t^{x,\ell}_1,u) = d(p_1^\ell,u)$.
Then, either $d(p_1^\ell,u) \neq d(p_1^\ell,v)$, or 
\begin{equation*}
d(t^{x,\ell}_1,v) - d(t^{x,\ell}_1,u) = d(t^{x,\ell}_1,v^\ell) + d(p_1^\ell,v) - 1 - d(t^{x,\ell}_1,u) = d(p_1^\ell,v) + 2 - d(p_1^\ell,u) = 2.
\end{equation*} 
Second, suppose that $x \neq y$.
If $u$ belongs to the path containing $w^{x,\ell}$, then we argue as previously.
By symmetry, we may also assume that $v$ does not belong to the path containing $w^{y,\ell}$.
This implies, in particular, that $i = j$ and $b^\ell$ is the endpoint in $H_\ell$ of both $P_i^{x,\ell}$ and $P_j^{y,\ell}$.
Thus, $d(v^x_{\phi(x)},u)\leq d(v^x_{\phi(x)},u^x_i) + d(u_i^x,b^\ell)\leq  d+4d$.
First, note that if a shortest path $P$ from $v^x_{\phi(x)}$ to $v$ contains $P_i^{x,\ell}$ as a subpath, then since $u \in V(P_i^{x,\ell})$, it follows that $d(v^x_{\phi(x)},u) < d(v^x_{\phi(x)},v)$.
Hence, we may assume that $P$ contains a vertex in $G_y$ or both $v^{\ell}$ and $b^\ell$.
By \Cref{clm:distances}(ii), if $P$ contains a vertex in $G_y$, then
\begin{equation*}
d(v^x_{\phi(x)},v)>d(V(G_x), V(G_y))\geq 6d > 5d \geq d(v^x_{\phi(x)},u).
\end{equation*}
Otherwise, $P$ contains $v^\ell$ and $b^\ell$, and so, letting $t \in [2] \setminus \{i\}$, we get that
\begin{equation*}
\mathsf{lgt}(P) \geq d(v^x_{\phi(x)},u^x_t) + d(u_t^x,v^\ell) + d(v^\ell,b^\ell) + d(b^\ell,v)
 \geq 1 + 4d + d + 1 > 5d \geq d(v^x_{\phi(x)},u).
\end{equation*}
Suppose finally that $u \in V(P_{b^\ell})$ for some $\ell \in \{c,\overline{c}~|~c \in C\}$.
By the two previous paragraphs, we may assume that $v \in V(P_{b^q})$ for some $q \in \{c,\overline{c}~|~c \in C\}$.
If $q = \ell$, then surely $d(p_1^\ell,u) \neq d(p_1^\ell,v)$. 
Otherwise, by \Cref{clm:distances}(i),
\begin{equation*}
d(p_1^\ell,v) \geq d(V(H_\ell),V(H_q)) = 4d > d+1 \geq d(p_1^\ell,u)
\end{equation*}
which concludes case 4.\\

By the above case analysis, we conclude that, for any $u,v \in V(G)$, there exists $w \in S$ such that $d(w,u) \neq d(w,v)$,
that is, $S$ is a resolving set of $G$. 
Since $|S| = k$, the lemma follows.
\end{proof}

\section{Distance to Cluster and Distance to co-cluster}\label{sec:FPT}

In this section, we prove that {\sc Metric Dimension} is \FPT\ parameterized by either the distance to cluster or the distance to co-cluster.
In fact, we show that the problem admits an exponential kernel parameterized by the distance to cluster (or co-cluster).
Since the main ideas for these two parameters are the same,
we focus on the distance to cluster parameter.
We remark that applying Reduction Rule~\ref{rr:twin} for false twins (instead of true twins) and defining equivalence classes over the independent sets (instead of cliques) for Reduction Rule~\ref{rr:identical-cliques}, we indeed get the similar result for the distance to co-cluster 
(we provide more details on how to adapt the proofs for the distance to co-cluster at the end of this section).
Recall that, for a graph $G$, the {\it distance to cluster} of $G$ is the minimum number of vertices of $G$ that need to be deleted 
so that the resulting graph is a {\it cluster graph}, that is, a disjoint union of cliques. 

\begin{theorem}
{\sc Metric Dimension} is \FPT\ parameterized by the distance to cluster.
\end{theorem}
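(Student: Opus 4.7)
The plan is to build an exponential kernel via two reduction rules applied to a modulator-based classification of $V(G)$. Let $\ell$ denote the distance to cluster. First, compute a set $X \subseteq V(G)$ with $|X| = \ell$ such that $G - X$ is a disjoint union of cliques $C_1, \ldots, C_p$; this can be done in FPT time by standard branching on induced $P_3$'s. For every $v \in V(G) \setminus X$, define its \emph{type} as $N(v) \cap X$, yielding at most $2^\ell$ possible types. Within any clique $C_i$, two vertices of the same type share the same closed neighborhood in $G$ and are therefore true twins.

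\textbf{Reduction Rule A (twin reduction).} While some clique contains three vertices $v_1, v_2, v_3$ of the same type, delete $v_3$ and decrement $k$ by $1$. Safety follows from Observation~\ref{obs:twins} --- at least two of the three must lie in any resolving set --- together with the fact that removing a true twin preserves all pairwise distances in the remaining graph. After exhaustive application, each clique of $G - X$ has at most $2 \cdot 2^\ell$ vertices, with at most two per type.

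\textbf{Reduction Rule B (identical-clique reduction).} Associate to every remaining clique $C$ its \emph{profile} $\pi_C : 2^X \to \{0, 1, 2\}$ counting the vertices of each type; there are at most $3^{2^\ell}$ distinct profiles. The key structural claim to establish is: if $C, C'$ share a profile and $\varphi : C \to C'$ is the induced type-preserving bijection, then for every $w \in V(G) \setminus (C \cup C')$ and every $u \in C$, one has $d(w, u) = d(w, \varphi(u))$. The argument uses that any shortest $(w, u)$-path must enter $C$ through a vertex $x$ whose predecessor lies in $X$ (since $G - X$ is a cluster graph, distinct cliques sit in distinct components); the same predecessor is also adjacent to $\varphi(x)$ because $\varphi$ preserves types, so rerouting through $\varphi(x)$ inside $C'$ yields a $(w, \varphi(u))$-path of equal length. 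Consequently, no external vertex resolves a pair $(u, \varphi(u))$, and if in addition $S \cap C = S \cap C' = \emptyset$ then no internal vertex does either (any $v \in C$ satisfies $d(v, u) = 1$ while $d(v, \varphi(u)) \geq 2$, so a vertex chosen inside either clique would resolve the pair). Combined with Observation~\ref{obs:twins} applied to any duplicated type inside a single clique, it follows that among any $q$ cliques sharing a profile, every resolving set meets all but at most one, so any profile appearing in more than $k + 1$ cliques witnesses a \no-instance.

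The two rules together leave $|V(G)| \leq \ell + (k + 1) \cdot 3^{2^\ell} \cdot 2^{\ell + 1}$, which suffices to establish \FPT: one can either enumerate candidate resolving sets on the reduced graph, or formulate the remaining decision problem as an integer program whose variables count how many vertices to select from each (profile, type)-class (at most $3^{2^\ell} \cdot 2^\ell$ of them) and invoke Lenstra's theorem. The main obstacle is the distance-preservation claim inside Reduction Rule B: it is precisely here that the cluster structure of $G - X$ is essential, since it forces any shortest path between two identical cliques to traverse $X$, enabling the type-preserving reroute; a careful case analysis ensures this rerouting does not shorten the path. The co-cluster case is handled symmetrically by replacing cliques of $G - X$ with the parts of its complete-multipartite complement and true twins with false twins throughout.
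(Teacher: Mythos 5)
There is a genuine gap, and it sits exactly where the paper's proof does its hardest work. Your Rules A and B reproduce the first half of the paper's argument (the twin reduction, the signature/profile classification of cliques, and the clone distance-preservation claim, which matches the paper's \Cref{obs:clones-make-cliques-intersect-solution} and \Cref{clm:clones}). But your way of handling a profile class with many cliques --- declare a \no-instance once a profile appears in more than $k+1$ cliques --- only bounds the reduced graph by a function of $k+\ell$, not of $\ell$ alone. Since $k$ can be of order $n$ (and indeed, after your reduction $k$ is necessarily $\Omega(n/2^{\ell})$ for a \yes-instance), neither of your proposed finishing moves closes this: brute-force enumeration on a graph of size $\Theta(k\cdot 3^{2^\ell}2^{\ell})$ is not $f(\ell)\cdot n^{\OO(1)}$, and the ILP as you describe it is not justified. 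In particular, vertices in the same (profile, type)-class but in \emph{different} cliques are not interchangeable: a clone $u'\in C'$ of $u\in C$ has $d(v,u)=1$ but $d(v,u')\ge 2$ for $v\in C\setminus\{u\}$, so whether a set with prescribed per-class counts is resolving depends on \emph{which} cliques the chosen vertices lie in (e.g., four vertices spread over four cliques of a class satisfy the ``at most one unhit clique'' constraint, while four vertices in two cliques do not). Counts per (profile, type)-class therefore do not determine feasibility, and making the ILP variables fine enough to fix this requires precisely the exchange arguments you have skipped.

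The paper instead bounds the number of cliques per equivalence class by a function of $\ell$ alone, via \Cref{rr:identical-cliques}: once a class has at least $2^{|X|+2}+|X|+2$ cliques, an entire clique is deleted and $k$ is decreased by $\max\{1,t(\calC)\}$. Proving this safe is the technical core of the section --- it needs the upper bound $|S\cap V(\calC)|\le |X|+|\calC|\cdot\max\{1,t(\calC)\}$ (\Cref{obs:upper-bound-sol-interesction-eqiv-class}), the resulting bound on $|\calC^S_{\ge 2}|$ resp.\ $|\calC^S_{\ge t(\calC)+1}|$ (\Cref{obs:bounding-partition}), an exchange argument producing a minimum resolving set with $C\in\calC^S_{=\max\{1,t(\calC)\}}$ (\Cref{clm:condition_res}), and the two directions \Cref{clm:GtoG-VC} and \Cref{clm:G-VCtoG}. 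None of this is present or replaced by an equivalent mechanism in your plan, so the proposal as written does not establish fixed-parameter tractability in the distance to cluster.
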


\begin{proof}
Let $(G,k)$ be an instance of {\sc Metric Dimension} and
let $X \subseteq V(G)$ be such that $G-X$ is a disjoint union of cliques.
To obtain a kernel for the problem, we present a set of reduction rules.
The safeness of the following reduction rule is trivial.

\begin{reduction rule}
If $V(G) \neq \emptyset$ and $k \le 0$ then return a trivial \no-instance.
\end{reduction rule}

\begin{reduction rule}
\label{rr:twin}
If there exist $u, v, w \in V(G)$ such that $u,v,w$ are true (or false) twins, 
then remove $u$ from $G$ and decrease $k$ by one.
\end{reduction rule}

\begin{lemma}\label{lem:rr:twin-safe}
\Cref{rr:twin} is safe.
\end{lemma}

\begin{proof}
Assume that there exist $u, v, w \in V(G)$ such that $u,v,w$ are (true or false) twins.
We first claim that there exists a minimum resolving set of $G$ that contains $u$.
Indeed, let $S$ be a minimum resolving set of $G$
and suppose that $u \notin S$.
Then, by \Cref{obs:twins}, $v,w \in S$.
Let $S_u = (S \setminus \{v\}) \cup \{u\}$.
We contend that $S_u$ is also a resolving set of $G$.
Indeed, towards a contradiction, suppose that there exist $x,y \in V(G)$ such that no vertex in $S_u$ resolves $x$ and $y$.
Since $S$ is a resolving set of $G$, it must then be that $v$ resolves $x$ and $y$.
But, by \Cref{obs:twins}, $\dist(u,x) = \dist(v,x) \neq \dist(v,y) = \dist(u,y)$, and so, $u$ resolves $x$ and $y$, a contradiction to our assumption.
Thus, $S_u$ is indeed a resolving set. 

We next show that $G$ has a resolving set of size at most $k$ if and only if $G-\{u\}$ has a resolving set of size at most $k-1$.
Assume first that $G$ has a resolving set $S$ of size at most $k$.
By the above paragraph, we may assume that $u \in S$.
Furthermore, by \Cref{obs:twins}, at least of one $v$ and $w$ belongs to $S$ as well, say $v \in S$ without loss of generality
But then, $S \setminus \{u\}$ is a resolving set of $G- \{u\}$:
indeed, for any $x,y \in V(G)$ such that $\dist(u,x) \neq \dist(u,y)$,
we have that $\dist(v,x) \neq \dist(v,y)$ by \Cref{obs:twins}. 
Conversely, if $G-\{u\}$ has a resolving set $S$ of size at most $k-1$,
then it is not difficult to see that $S \cup \{u\}$ is a resolving set of $G$.
\end{proof}

We assume, henceforth, that \Cref{rr:twin} has been exhaustively applied to $(G,k)$.
This implies, in particular, that for every clique $C$ of $G-X$,
there are at most two vertices in $C$ with the same neighborhood in $X$.
Since the number of distinct neighborhoods in $X$ is at most $2^{|X|}$, each clique in $G-X$ has order at most $2^{|X|+1}$.
We now aim to bound the number of cliques in $G-X$.
To this end, we define a notion of \emph{equivalence classes} over the set of cliques of $G-X$.
It will easily be seen that the number of equivalence classes is at most $2^{2^{|X|+1}}$.
The number of cliques in each equivalence class will then be bounded by using \Cref{rr:identical-cliques}. 

For every clique $C$ of $G-X$, the \emph{signature} $\sign(C)$ of $C$ is the \emph{multiset} containing 
the neighborhoods in $X$ of each vertex of $C$, that is, $\sign(C) = \{N(u) \cap X: u \in C\}$.
For any two cliques $C_1,C_2$ of $G-X$, we say that $C_1$ and $C_2$ are \emph{identical},
which we denote by $C_1 \sim C_2$, 
if and only if $\sign(C_1) = \sign(C_2)$.
It is not difficult to see that $\sim$ is in fact an equivalence relation with at most $2^{2^{|X|+1}}$ equivalence classes: 
indeed, since the number of distinct neighborhoods in $X$ is at most $2^{|X|}$, 
and at most two vertices of each clique have the same neighborhood in $X$, 
the number of distinct signatures is at most $2^{2^{|X|+1}}$.
Consider now an equivalence class $\calC$ of $\sim$. 
Note that since the signature of a clique is a multiset, the number of vertices in each $C \in \calC$ is equal to $|\sign(C)|$.
For any $C_1,C_2 \in \calC$, we say that two vertices $u \in C_1$ and $v \in C_2$ are {\em clones} if $N(u) \cap X = N(v) \cap X$
(in particular, if $C_1 = C_2$ and $u \neq v$, then $u, v$ are true twins).
For any $C_1,C_2 \in \calC$ and any $u \in C_1$, we denote by $c(u,C_2)$ the set of clones of $u$ in $C_2$
(note that $|c(u,C_2)| \leq 2$).
Now observe that, for any two cliques $C_1,C_2 \in \calC$,
the number of pairs of true twins in $C_1$ and $C_2$ is the same:
we let $t(\calC)$ be the number of pairs of true twins in each clique of $\calC$.
We highlight that there are exactly $2t(\calC)$ vertices in each clique of $\calC$ that have true twins.
The following claim for clones is the analog of \Cref{obs:twins} for twins.

\begin{claim}
\label{obs:clones-make-cliques-intersect-solution}
Let $C_1$ and $C_2$ be two cliques of an equivalence class $\calC$ of $\sim$.
Let $u \in C_1$ and $v \in C_2$ be clones.
Then, for any $w \in V(G) \setminus (V(C_1) \cup V(C_2))$, $d(u,w) = d(v,w)$, and so,
for any resolving set $S$ of $G$, $S \cap (V(C_1) \cup V(C_2)) \neq \emptyset$.
\end{claim}

\begin{claimproof}
Let $w \in V(G) \setminus (V(C_1) \cup V(C_2))$. 
We show that $d(w, u) = d(w, v)$. 
Towards a contradiction and without loss of generality, assume that $d(w, u) < d(w, v)$.
Let $x$ be the first internal vertex on some shortest $(u,w)$-path $P$. 
Then, by definition, either $x \in X$ or $x \in C_1$.
Suppose first that $x \in X$.
Since $v$ is a clone of $u$, $N_X(u) = N_X(v)$, and so, $x \in N(v)$.
It follows that $d(w, v) \leq 1 + d(x, w) = 1 + d(u, w) -1 = d(u,w)$, a contradiction.
Second, suppose that $x \in C_1$ and let $y$ be the clone of $x$ in $C_2$.
Let $z$ be the vertex after $x$ on $P$.
Then, $z \not \in C_1$ since $P$ is a shortest path and $C_1$ is a clique.
It follows that $z \in X$; in particular, $z \in N_X(x) = N_X(y)$. 
But, $y, v \in C_2$, and so, $d(v, w) \leq 2 + d(z,w) = 2 + d(u,w) - 2 = d(u,w)$, a contradiction. 

Now, consider a resolving set $S$ of $G$.
Then, there exists $w \in S$ such that $d(u,w) \neq d(v,w)$. 
But, by the above paragraph, necessarily $w \in V(C_1) \cup V(C_2)$, 
and so, $S \cap (V(C_1) \cup V(C_2)) \neq \emptyset$.
\end{claimproof}

It follows from the above claim that, for any equivalence class $\calC$ of $\sim$
and any resolving set $S$, $S$ contains at least $|\calC| -1$ vertices in $V(\calC) := \bigcup_{C \in \calC} V(C)$.
We now present an upper bound on the size of $S \cap V(\calC)$ when $|\calC| \ge |X| + 2$.	

\begin{claim}
\label{obs:upper-bound-sol-interesction-eqiv-class}
For every equivalence class $\calC$ of $\sim$,
if $|\calC| \geq |X| + 2$ 
then, for any minimum resolving set $S$ of $G$, $|S \cap V(\calC)| \le |X| + |\calC| \cdot \max\{1, t(\calC)\}$.
\end{claim}

\begin{claimproof}
Let $\calC$ be an equivalence class such that $|\calC| \geq |X| + 2$.
Suppose, towards a contradiction, that there exists a minimum resolving set $S$ of $G$ 
such that $|S \cap V(\calC)| > |X| + |\calC| \cdot \max\{1,  t(\calC)\}$.
Consider the subset $S^{\circ}$ of $V(G)$ obtained from $S$ as follows:
\begin{enumerate}
\item add every vertex of $S$ to $S^\circ$;
\item delete all vertices of $S \cap V(\calC)$ from $S^\circ$, and add all the vertices of $X$ to $S^{\circ}$;
\item if $t(\calC) = 0$ then, for every clique $C \in \calC$, add an arbitrary vertex of $C$ to $S^{\circ}$, and 
otherwise, for every pair of twin vertices in $C \in \calC$, add one of them to $S^{\circ}$. 
\end{enumerate}
Now, note that the second step removes at least  $|X| + |\calC| \cdot \max\{1,  t(\calC)\} + 1$ vertices from $S^\circ$
and adds exactly $|X|$ vertices to $S^\circ$.
Since the third step adds $|\calC| \cdot \max\{1,  t(\calC)\}$ vertices to $S^\circ$, it follows that $|S^{\circ}| < |S|$. 
Let us next prove that $S^{\circ}$ is a resolving set of $G$, which if true would contradict the minimality of $S$, 
and thus, conclude the proof.

Consider any two distinct vertices $u,v \in V(G)$.
Since $X \subseteq S^\circ$, if $\{u,v\} \cap X \neq \emptyset$, then any vertex in $\{u,v\} \cap X$ resolves $u$ and $v$.
Thus, we may assume that $u,v \in V(G) \setminus X$.
Suppose first that one of $u$ and $v$ belongs to $V(\calC)$, say $u \in V(\calC)$ without loss of generality.
Let $C \in \calC$ be the clique such that $u \in V(C)$.
If $v \notin V(C)$, then the vertex in $S^\circ \cap V(C)$ surely resolves $u$ and $v$.
Thus, suppose that $v \in V(C)$ as well.
If $N(u) \cap X \neq N(v) \cap X$, then any vertex in $(N(u) \setminus N(v)) \cap X$ or $(N(v) \setminus N(u)) \cap X$ resolves $u$ and $v$.
Otherwise, $u$ and $v$ are true twins, and so, by construction, one of them belongs to $S^\circ$ which resolves the pair.
Finally, suppose that $u,v \notin X \cup V(\calC)$.
Since $S$ is a resolving set of $G$, there exists $w \in S$ such that $d(w,u) \neq d(w,v)$, 
say $d(w,u) < d(w,v)$ without loss of generality.
If $w$ belongs to $S^\circ$ as well, then we are done.
Thus, suppose that $w \notin S^\circ$.
Then, $w \in S \cap V(\calC)$ by construction.
Let $P$ be a shortest path from $u$ to $w$.
Since $u \notin X \cup V(\calC)$, necessarily $V(P) \cap X \neq \emptyset$:
let $x \in X$ be the closest internal vertex of $P$ to $w$.
Then, $\dist(w, u) = \dist(w, x) + \dist(x, u)$, and since $d(w,u) < d(w,v) \leq d(w,x) + d(x,v)$,
we conclude that $d(x,u) < d(x,v)$.
Since $X \subseteq S^\circ$, it follows that there is a vertex in $S^\circ$ that resolves $u$ and $v$.
Therefore, $S^\circ$ is a resolving set of $G$, which concludes the proof.
\end{claimproof}

Let $\calC$ be an equivalence class of $\sim$, and let $S$ be a resolving set of $G$.
For every $i \geq 0$, we denote by $\calC^S_{=i}$ ($\calC^S_{\geq i}$, respectively) the set of cliques $C \in \calC$ 
such that $|S \cap V(C)| = i$ ($|S \cap V(C)| \geq i$, respectively).

\begin{claim}
\label{obs:bounding-partition}
Let $\calC$ be an equivalence class of $\sim$ such that $|\calC| \geq |X| + 2$.
Then, for any minimum resolving set $S$ of $G$, the following hold:
\begin{itemize}
\item[(i)] if $t(\calC) = 0$, then $|\calC^S_{=0}| \le 1$ and $|\calC^S_{\ge 2}| \le |X| + 1$;
\item[(ii)] if $t(\calC) \neq 0$, then $|\calC^S_{\ge t(\calC) + 1}| \le |X| + 1$.
\end{itemize}
\end{claim}

\begin{claimproof}
Suppose first that $t(\calC) = 0$. 
Then, by \Cref{obs:clones-make-cliques-intersect-solution}, $|\calC^S_{=0}| \le 1$.
Since $|\calC^S_{=0}| + |\calC^S_{=1}| + |\calC^S_{\geq 2}| = |\calC|$, it follows that 
$|\calC^S_{=1}| \ge |\calC| - 1 - |\calC^S_{\ge 2}|$.
Now, by definition, $|S \cap V(\calC)| \ge |\calC^S_{=1}| + 2 \cdot |\calC^S_{\ge 2}|$, and so, by the above,
$$|S \cap V(\calC)| \ge |\calC| - 1 - |\calC^S_{\ge 2}| + 2 \cdot |\calC^S_{\ge 2}| = |\calC| - 1 + |\calC^S_{\ge 2}|.$$
Thus, if $|\calC^S_{\ge 2}| \ge |X| + 2$, then $|S \cap V(\calC)| \ge |\calC| - 1 + |X| + 2$, 
a contradiction to \Cref{obs:upper-bound-sol-interesction-eqiv-class}.

Second, suppose that $t(\calC) \neq 0$.
Then, by \Cref{obs:twins}, any resolving set of $G$ contains at least $t(\calC)$ vertices from each clique in $\calC$,
which implies, in particular, that $(\calC^S_{=t(\calC)}, \calC^S_{\ge t(\calC) + 1})$ is a partition of $\calC$. 
Now, by definition, 
\begin{equation*}
\begin{split}
|S \cap V(\calC)| &\ge t(\calC)  \cdot |C^S_{=t(\calC)}| + (t(\calC) + 1) \cdot |\calC^S_{\ge t(\calC) + 1}| 
=\\ 
&= t(\calC) \cdot (|\calC^S_{=t(\calC)}| + |\calC^S_{\ge t(\calC) + 1}|) + |\calC^S_{\ge t(\calC) + 1}| 
= \\ 
&= t(\calC)\cdot |\calC| +  |\calC^S_{\ge t(\calC) + 1}|.
\end{split}
\end{equation*}
Thus, if $|\calC^S_{\ge t(\calC) + 1}| \ge |X| + 2$, then $|S \cap V(\calC)| \ge t(\calC) \cdot |\calC| + |X| + 2$,
a contradiction to \Cref{obs:upper-bound-sol-interesction-eqiv-class}.
\end{claimproof}

\Cref{obs:clones-make-cliques-intersect-solution} and \Cref{obs:bounding-partition} together imply that if some equivalence class $\calC$ of $\sim$ contains at least $|X| + 3$ cliques,
then, for any minimum resolving set $S$ of $G$, 
if $t(\calC) = 0$, then $\calC^S_{=1} \neq \emptyset$, and otherwise, $\calC^S_{= t(\calC)} \neq \emptyset$.
The following reduction rule is based on this fact.

\begin{reduction rule}
\label{rr:identical-cliques}
If there exists an equivalence class $\calC$ of $\sim$ such that $|\calC| \ge 2^{|X|+2} +  |X| + 2$,  
then remove a clique $C \in \calC$ from $G$ and reduce $k$ by $\max\{1,t(\calC)\}$.
\end{reduction rule}

We next prove that \Cref{rr:identical-cliques} is safe.
To this end, we first prove the following.

\begin{claim}\label{clm:clones}
Let $C_1$ and $C_2$ be two identical cliques. 
Then, for every $u_1 \in V(C_1)$ and $v_2 \in V(C_2)$, $d(u_1,v_2) = d(u_2,v_1)$, 
where $u_2 \in c(u_1,C_2)$ and $v_1 \in c(v_2,C_1)$. 
\end{claim}

\begin{claimproof}
Consider $u_1 \in V(C_1)$ and $v_2 \in V(C_2)$ and let $P = w_1\ldots w_p$, where $u_1 = w_1$ and $v_2 = w_p$,
be a shortest path from $u_1$ to $v_2$.
Let $u_2 \in c(u_1,C_2)$ and $v_1 \in c(v_2,C_1)$.
Suppose first that $w_2$ belongs to $X$.
If $w_{p-1}$ belongs to $X$ as well,
then $u_2P[w_2,w_{p-1}]v_1$ is a path from $u_2$ to $v_1$ 
since $N_X(u_1) = N_X(u_2)$ and $N_X(v_1) = N_X(v_2)$.
Suppose next that $w_{p-1}$ belongs to $C_2$ and let $t \in c(w_{p-1},C_1)$.
Since $P$ is a shortest path and $C_2$ is a clique, $w_{p-2}$ must belong to $X$.
Then, the path $u_2P[w_2,w_{p-2}]tv_1$ is a path from $u_2$ to $v_1$ since $N_X(w_{p-1}) = N_X(t)$.
Second, suppose that $w_2$ belongs to $C_1$ and let $t \in c(w_2,C_2)$.
By symmetry, we may assume that $w_{p-1}$ belongs to $C_2$ (we fall back into the previous case otherwise), 
and let $r \in c(w_{p-1},C_1)$.
Since $P$ is a shortest path and $C_1,C_2$ are cliques, it must be that $w_3,w_{p-2} \in X$, and thus, 
the path $u_2tP[w_3,w_{p-2}]rv_1$ is a path from $u_2$ to $v_1$ 
since $N_X(w_2) = N_X(t)$ and $N_X(w_{p-1}) = N_X(r)$.
In all cases, we obtain that $d(u_2,v_1) \leq d(u_1,v_2)$,  
and by symmetry, we conclude that in fact equality holds.
\end{claimproof}

\begin{lemma}
\label{claim:rr-idnetical-cliques-safe}
\Cref{rr:identical-cliques} is safe.
\end{lemma}

\begin{proof}
Assume that there exists an equivalence class $\calC$ of $\sim$ such that $|\calC| \ge 2^{|X|+2} +  |X| + 2$ 
and consider a clique $C_1 \in \calC$. Then the following hold.

\begin{claim}
\label{clm:condition_res}
There exists a minimum resolving set $S$ of $G$ such that 
if $t(\calC) = 0$, then $C_1 \in \calC^S_{=1}$, and otherwise, $C_1 \in \calC^S_{=t(\calC)}$.
\end{claim}

\begin{claimproof}
Consider a minimum resolving set $S$ of $G$.
Assume that $C_1 \notin \calC^S_{=\max\{1,t(\calC)\}}$ (we are done otherwise).
Since $|\calC| \ge 2^{|X|+2} +  |X| + 2$, then $\calC^S_{=\max\{1,t(\calC)\}} \neq \emptyset$ by \Cref{obs:bounding-partition}, and so, let $C_2 \in \calC^S_{=\max\{1,t(\calC)\}}$.
We now construct a resolving set $S^\circ$ from $S$ as follows.
First, we add every vertex in $S \setminus (V(C_1) \cup V(C_2))$ to $S^\circ$.
Then, for every $x \in V(C_1) \cap S$, we add a clone of $x$ in $C_2$ to $S^\circ$.
Similarly, for every $x \in V(C_2) \cap S$, we add a clone of $x$ in $C_1$ to $S^\circ$.
Let us show that $S^\circ$ is indeed a resolving set of $G$.
Suppose to the contrary that there exist $u,v\in V(G)$ such that no vertex in $S^\circ$ resolves $u$ and $v$.
Assume first that $u,v \notin V(C_1) \cup V(C_2)$.
Since $S$ is a resolving set of $G$, there exists $w \in S \setminus S^\circ$ such that $d(w,u) \neq d(w,v)$. 
In particular, either $w \in V(C_1)$ or $w \in V(C_2)$.
Now, if $w \in V(C_1)$, then a clone $w_2 \in V(C_2)$ of $w$ belongs to $S^\circ$ by construction.
But, by \Cref{obs:clones-make-cliques-intersect-solution}, $d(w_2,u) = d(w,u) \neq d(w,v) = d(w_2,v)$, 
a contradiction to our assumption.
By symmetry, we conclude similarly if $w \in V(C_2)$.
Thus, at least one of $u$ and $v$ belongs to $V(C_1) \cup V(C_2)$.

Suppose first that exactly one of $u$ and $v$ belongs to $V(C_1) \cup V(C_2)$, 
say $u \in V(C_1) \cup V(C_2)$ without loss of generality .
Let us suppose that $u \in V(C_1)$ (the case where $u \in V(C_2)$ is handled symmetrically).
Since $S$ is a resolving set of $G$, there exists $w \in S$ such that $d(w,u_2) \neq d(w,v)$, 
where $u_2 \in c(u,C_2)$.
If $w \in S^\circ$, then $w \notin V(C_1) \cup V(C_2)$ by construction, and so,
by \Cref{obs:clones-make-cliques-intersect-solution}, $d(w,u) = d(w,u_2) \neq d(w,v)$,
a contradiction to our assumption.
Suppose therefore that $w \notin S^\circ$.
If $w \in V(C_2)$, then a clone $w_1 \in c(w,C_1)$ belongs to $S^\circ$ by construction. 
But, by \Cref{obs:clones-make-cliques-intersect-solution}, $d(w_1,v) = d(w,v)  \neq d(w,u_2) = d(w_1,u)$,
a contradiction to our assumption.
Similarly, if $w \in V(C_1)$, then a clone $w_2 \in c(w,C_2)$ belongs to $S^\circ$ by construction. 
But, by \Cref{obs:clones-make-cliques-intersect-solution} and \Cref{clm:clones}, $d(w_2,v) = d(w,v)  \neq d(w,u_2) = d(w_2,u)$,
a contradiction to our assumption.

Second, suppose that both $u$ and $v$ belong to $V(C_1) \cup V(C_2)$.
Then, either (1) $u$ and $v$ belong to the same clique or (2) $u$ and $v$ belong to different cliques.

Assume first that (1) holds, say $u,v \in V(C_1)$ without loss of generality.  
Since $S$ is a resolving set of $G$, there exists $w \in S$ such that $d(w,u_2) \neq d(w,v_2)$, 
where $u_2,v_2 \in V(C_2)$ are clones of $u,v$, respectively.
If $w \in S^\circ$, then $w \notin V(C_1) \cup V(C_2)$ by construction, and so,
by \Cref{obs:clones-make-cliques-intersect-solution}, $d(w,u) = d(w,u_2) \neq d(w,v_2) = d(w,v)$,
a contradiction to our assumption.
Suppose therefore that $w \notin S^\circ$.
Since $C_2$ is a clique, it must then be that $w \in V(C_1)$. 
But, a clone $w_2 \in c(w,C_2)$ belongs to $S^\circ$ by construction, 
and $d(w_2,u) = d(w,u_2) \neq d(w,v_2) = d(w_2,v)$ by \Cref{clm:clones},
a contradiction to our assumption.

Assume finally that (2) holds, say $u \in V(C_1)$ and $v \in V(C_2)$ without loss of generality
Since $S$ is a resolving set, there exists $w \in S$ such that $d(w,u_2) \neq d(w,v_1)$,
where $u_2 \in V(C_2)$ and $v_1 \in V(C_1)$ are clones of $u$ and $v$, respectively.
If $w \in S^\circ$, then $w \notin V(C_1) \cup V(C_2)$ by construction, and so,
by \Cref{obs:clones-make-cliques-intersect-solution}, $d(w,u) = d(w,u_2) \neq d(w,v_1) = d(w,v)$,
a contradiction to our assumption.
Suppose therefore that $w \notin S^\circ$.
If $w \in V(C_1)$, then a clone $w_2 \in c(w,C_2)$ belongs to $S^\circ$ by construction. 
But, by \Cref{clm:clones}, $d(w_2,u) = d(w,u_2) \neq d(w,v_1) = d(w_2,v)$, a contradiction.
By symmetry, we conclude similarly if $w \in V(C_2)$.  
Therefore, $S^\circ$ is a resolving set of $G$, and since $C_1 \in \calC^{S^\circ}_{=\max\{1,t(\calC)\}}$ by construction,
our claim follows.
\end{claimproof}
 
In the following, for simplicity, we let $C = C_1$. Let us next show that $G$ has a resolving set of size at most $k$ if and only if 
$G- V(C)$ has a resolving set of size at most $k - \max\{1,t(\calC)\}$. We prove the two directions in two separate claims.

\begin{claim}
\label{clm:GtoG-VC}
If $G$ has a resolving set of size at most $k$, then $G- V(C)$ has a resolving set of size at most $k - \max\{1,t(\calC)\}$.
\end{claim}

\begin{claimproof}
Let $S$ be a resolving set of $G$ of size at most $k$.
By \Cref{clm:condition_res}, we may assume that if $t(\calC) = 0$, then $C \in \calC^S_{=1}$, and otherwise $C \in \calC^S_{=t(\calC)}$.
Now, suppose that $R = S\setminus V(C)$ is not a resolving set of $G- V(C)$ (we are done otherwise).
Then, there exist $u,v \in V(G) \setminus V(C)$ such that no vertex in $R$ resolves $u$ and $v$.
Since $S$ is a resolving set of $G$, there exists $w \in S \setminus R$ such that $d(w,u) \neq d(w,v)$, and  
in particular, $w \in V(C)$.
It follows that no clone $w_B$ of $w$ in a clique $B \in \calC \setminus \{C\}$ belongs to $S$,
as otherwise, by \Cref{obs:clones-make-cliques-intersect-solution}, $d(w_B,u) = d(w,u) \neq d(w,v) = d(w_B,v)$,
a contradiction to our assumption.
Since $S$ contains at least one vertex from each pair of true twins, this implies in particular that $w$ has no true twin.
It follows that $t(\calC) = 0$. 
Indeed, if $t(\calC) \neq 0$, then $|S \cap V(C)| \geq t(\calC) + 1$ 
since $S \cap V(C)$ contains $w$ and at least one vertex from each pair of true twins, a contradiction to the choice of $C$.
Now, $|\calC| \geq 2^{|X|+2} +  |X| + 2$, and so, by \Cref{obs:bounding-partition}, $|\calC^S_{=1}| \geq 2^{|X|+2}$.
It follows that there exist four distinct cliques $C_1,C_2,C_3,C_4 \in \calC^S_{=1}$ 
such that the vertex in $S \cap V(C_4)$ is a clone of the vertex in $S \cap V(C_i)$ for every $i \in [3]$.
Without loss of generality, we may assume that $C \neq C_1,C_2,C_3$.
Let $z \in S \cap V(C_1)$ and let $w_1 \in V(C_1)$ be a clone of $w$ (recall that $w \in S \cap V(C)$).
We claim that the set $R^*$ obtained from $R$ by replacing $z$ with $w_1$ is a resolving set of $G - V(C)$.
Indeed, suppose to the contrary that there exist $x,y \in V(G) \setminus V(C)$ such that no vertex in $R^*$ resolves $x$ and $y$.
Then, either (1) $w$ resolves $x$ and $y$ or (2) $z$ resolves $x$ and $y$.

Assume first that (1) holds.
If $x,y \notin V(C_1)$, then by \Cref{obs:clones-make-cliques-intersect-solution}, 
$d(w_1,x) = d(w,x) \neq d(w,y) = d(w_1,y)$,
a contradiction to our assumption. 
It follows that at least one of $x$ and $y$ belongs to $C_1$.
Suppose first that exactly one of $x$ and $y$ belongs to $C_1$, say $x\in V(C_1)$ without loss of generality
Since $S$ is a resolving set of $G$, there exists $t \in S$ such that $d(t,x^\circ) \neq d(t,y)$, 
where $x^\circ \in c(x,C)$.
If $t = w$, then by \Cref{obs:clones-make-cliques-intersect-solution}, $d(w_1,y) = d(w,y) \neq d(w,x^\circ) = d(w_1,x)$,
a contradiction to our assumption.
Now, suppose that $t = z$ and let $z_2 \in c(z,C_2)$ (recall that $z_2 \in S \cap R^*$).
If $y \in V(C_2)$, then $d(z_2,y) \leq 1 \neq d(z_2,x)$, and 
if $y \notin V(C_2)$, then by \Cref{obs:clones-make-cliques-intersect-solution}, $d(z_2,y) = d(z,y) \neq d(z,x^\circ) = d(z_2,x)$,
a contradiction in both cases to our assumption.
Thus, $t \notin V(C) \cup V(C_1)$. 
But then, by \Cref{obs:clones-make-cliques-intersect-solution}, $d(t,x) = d(t,x^\circ) \neq d(t,y)$, 
a contradiction to our assumption.
Suppose therefore that both $x$ and $y$ belong to $C_1$.
Since $S$ is a resolving set of $G$, there exists $t \in S$ such that $d(t,x^\circ) \neq d(t,y^\circ)$,
where $x^\circ,y^\circ \in V(C)$ are the clones of $x,y$ respectively.
If $t = w$, then either $x^\circ = w$ or $y^\circ = w$, say the latter holds without loss of generality 
But then $d(w_1,y) = 0 < d(w_1,x)$, a contradiction to our assumption.
If $t = z$, then by \Cref{obs:clones-make-cliques-intersect-solution}, $d(z_2,x) = d(z,x^\circ) \neq d(z,y^\circ) = d(z_2,y)$, 
where $z_2 \in V(C_2)$ is the clone of $z$,
a contradiction to our assumption.
Thus, $t \neq z,w$, and so,
by \Cref{obs:clones-make-cliques-intersect-solution}, $d(t,x) = d(t,x^\circ) \neq d(t,y^\circ) = d(t,y)$,
a contradiction to our assumption.

Assume now that (2) holds and, for all $i \in \{2,3\}$, let $z_i \in V(C_i)$ be the clone of $z$.
Then, at least one of $x$ and $y$ belongs to $V(C_1) \cup V(C_2)$, 
as otherwise, by \Cref{obs:clones-make-cliques-intersect-solution}, $d(z_2,x) = d(z,x) \neq d(z,y) = d(z_2,y)$.
The same argument shows that at least one of $x$ and $y$ belongs to $V(C_1) \cup V(C_3)$.
It follows that either one of $x$ and $y$ belongs to $C_2$, while the other belongs to $C_3$, 
in which case $z_2$ resolves $x$ and $y$, 
or at least one of $x$ and $y$ belongs to $C_1$.
Suppose first that $x,y \in V(C_1)$.
Then, either $x = z$ or $y = z$, say the latter holds without loss of generality 
In particular, $z \neq w_1$.
Since $S$ is a resolving set of $G$, there exists $t \in S$ such that $d(t,x^\circ) \neq d(t,y^\circ)$, 
where $x^\circ,y^\circ \in V(C)$ are the clones of $x,y$, respectively.
Since $z \neq w_1$, $t \neq w$ by construction, and 
in particular, $t \notin V(C)$.
Now, if $t = z$, then by \Cref{obs:clones-make-cliques-intersect-solution},
$d(z_2,x) = d(z,x^\circ) \neq d(z,y^\circ) = d(z_2,y)$, 
and if $t \neq z$, then $t \in S \setminus (V(C_1) \cup V(C)) \subseteq R^*$, and so, $t$ resolves $x$ and $y$
since by \Cref{obs:clones-make-cliques-intersect-solution}, $d(t,x) = d(t,x^\circ) \neq d(t,y^\circ) = d(t,y)$,
a contradiction in both cases to our assumption. 
Thus, it must be that exactly one of $x$ and $y$ belongs to $C_1$, say $x \in V(C_1)$ without loss of generality
Then, $y \notin V(C_2)$, as otherwise, $z_2$ resolves $x$ and $y$,
and similarly, $y \notin V(C_3)$.
Now, since $S$ is a resolving set of $G$, there exists $t \in S$ such that $d(t,x^\circ) \neq d(t,y)$,  
where $x^\circ \in V(C)$ is the clone of $x$.
If $t = w$, then by \Cref{obs:clones-make-cliques-intersect-solution}, $d(w_1,x) = d(w,x^\circ) \neq d(w,y) = d(w_1,y)$,
a contradiction to our assumption.
It $t = z$, then by \Cref{obs:clones-make-cliques-intersect-solution}, $d(z_2,x) = d(z,x^\circ) \neq d(z,y) = d(z_2,y)$,
a contradiction to our assumption.
Otherwise, $t \in S \setminus (V(C_1) \cup V(C)) \subseteq R^*$, and so, $t$ resolves $x$ and $y$
since by \Cref{obs:clones-make-cliques-intersect-solution}, $d(t,x) = d(t,x^\circ) \neq d(t,y)$,
a contradiction to our assumption.
Thus, $R^*$ is indeed a resolving set of $G- V(C)$ of size at most $k-1$.
\end{claimproof}

\begin{claim}
\label{clm:G-VCtoG}
If $G-V(C)$ has a resolving set of size at most $k - \max\{1,t(\calC)\}$, then $G$ has a resolving set of size at most $k$.
\end{claim}

\begin{claimproof}
Let $S$ be a resolving set of $G-V(C)$ of size at most $k - \max\{1,t(\calC)\}$.
Consider the set $R$ constructed from $S$ as follows.
First, add every vertex of $S$ to $R$.
Then, letting $B^\circ$ be a clique of $\calC^S_{=\max\{1,t(\calC)\}}$
(note that since $|\calC| \geq 2^{|X|+2} +  |X| + 2$, it follows from \Cref{obs:bounding-partition} that 
$\calC^S_{=\max\{1,t(\calC\}} \neq \emptyset$),
for every $x^\circ \in S \cap V(B^\circ)$, we add a clone $x \in V(C)$ of $x^\circ$ to $R$.
We contend that $R$ is a resolving set of $G$.
Indeed, consider $u,v \in V(G)$.
We may assume that $R \cap \{u,v\} = \emptyset$, as otherwise it is clear that a vertex in $R$ resolves $u$ and $v$.
Now, if $\{u,v\} \cap V(C) = \emptyset$, then there exists a vertex $w \in S \subseteq R$ that resolves $u$ and $v$ 
since $S$ is a resolving set of $G$.
Assume next that $|\{u,v\} \cap V(C)| =1$, say $u \in V(C)$ without loss of generality
Suppose that no vertex in $R \cap V(C)$ resolves $u$ and $v$. 
Note that in this case, $v \in X$ and, for every $x \in R \cap V(C)$, $d(x,u) = 1 = d(x,v)$.
Let $u^\circ \in V(B^\circ)$ be the clone of $u$, where $B^\circ \in \calC^S_{=\max\{1,t(\calC)\}}$ is the clique considered in the construction of $R$.
Since $S$ is a resolving set of $G-V(C)$, there exists $t \in S$ such that $d(t,u^\circ) \neq d(t,v)$.
If $t \notin V(B^\circ)$, then by \Cref{obs:clones-make-cliques-intersect-solution}, $d(t,u) = d(t,u^\circ) \neq d(t,v)$,
that is, $t \in R$ resolves $u$ and $v$.
Suppose therefore that $t \in V(B^\circ)$. 
Then, it must be that $t = u^\circ$. 
Indeed, if $t \neq u^\circ$, then $d(t,u^\circ) =1$, 
but since for all $x \in R \cap V(C)$, $d(x,u) = 1 = d(x,v)$, it follows by construction that $d(t,v) =1$, a contradiction.
Thus, $t$ must have a true twin $t'$. 
Indeed, if $t$ has no true twin, then by construction, $u \in R$, a contradiction to our assumption.
Now, since $B^\circ \in \calC^S_{=\max\{1,t(\calC)\}}$, $t' \notin S$. 
But, since, for every $x \in R \cap V(C)$, $d(x,v) = 1$, it follows by construction that, for every $y\in S \cap V(B^\circ)$,
$d(y,v) = 1 = d(y,t')$.  
Thus, there must exist $w \in S \setminus V(B^\circ)$ such that $d(w,t') \neq d(w,v)$. 
But then, by \Cref{obs:clones-make-cliques-intersect-solution}, $d(w,u) = d(w,t') \neq d(w,v)$,
that is, $w \in R$ resolves $u$ and $v$.
Suppose finally that both $u$ and $v$ belong to $C$.
We claim that there exists a clique $B \in \calC \setminus \{C\}$ such that 
a clone $u_B$ of $u$ in $B$ and a clone $v_B$ of $v$ in $B$ are resolved by a vertex in $S \setminus V(B)$.
Indeed, towards a contradiction suppose that the statement does not hold.
Then, for every clique $B \in \calC \setminus \{C\}$, every $u_B \in c(u,B)$, and $v_B \in c(v,B)$,
$u_B$ and $v_B$ are resolved by a vertex $w$ in $V(B)$. 
Also, since $d(x,u_B) = d(x,v_B)$ for any $x \in V(B) \setminus \{u_B,v_B\}$,
we conclude that in fact $w \in \{u_B,v_B\}$.
It follows that $t(\calC) \neq 0$. 
Indeed, if $t(\calC) = 0$, 
then, in particular, $S \cap V(B^\circ) \subseteq \{u_{B^\circ},v_{B^\circ}\}$,  
where $B^\circ \in \calC^S_{=\max\{1,t(\calC)\}}$ is the clique considered in the construction of $R$. 
But, $u,v \notin R$ by assumption, a contradiction to the choice of $R$.
Now, if neither $u$ nor $v$ have true twins, then since $S$ contains at least one vertex from each pair of twins,
\begin{equation*}
\begin{split}
|S \cap V(\calC \setminus \{C\})| &\geq \bigg|S \cap \bigcup_{B \in \calC \setminus \{C\}} c(u,B) \cup c(v,B)\bigg| + (|\calC| - 1) \cdot t(\calC)\\
&\geq |\calC \setminus \{C\}| + (|\calC| - 1) \cdot t(\calC) \\
& \geq  2^{|X|+2} +  |X| + 1 + (|\calC| - 1) \cdot t(\calC),
\end{split}
\end{equation*}
a contradiction to \Cref{obs:upper-bound-sol-interesction-eqiv-class}.
Similarly, if exactly one of $u$ and $v$ has a true twin, then since $S$ contains at least one vertex from every other pair of twins,
\begin{equation*}
\begin{split}
|S \cap V(\calC \setminus \{C\})| &\geq \bigg|S \cap \bigcup_{B \in \calC \setminus \{C\}} c(u,B) \cup c(v,B)\bigg| + (|\calC| - 1) \cdot (t(\calC) -1)\\
&\geq 2|\calC \setminus \{C\}| + (|\calC| - 1) \cdot (t(\calC)-1)\\
& \geq 2(2^{|X|+2} +  |X| + 1) + (|\calC| - 1) \cdot t(\calC-1),
\end{split}
\end{equation*}
a contradiction to \Cref{obs:upper-bound-sol-interesction-eqiv-class}.
Finally, if both $u$ and $v$ have true twins, then since $S$ contains at least one vertex from every other pair of twins,
\begin{equation*}
\begin{split}
|S \cap V(\calC \setminus \{C\})| &\geq \bigg|S \cap \bigcup_{B \in \calC \setminus \{C\}} c(u,B) \cup c(v,B)\bigg| + (|\calC| - 1) \cdot (t(\calC) -2)\\
&\geq 3|\calC \setminus \{C\}| + (|\calC| - 1) \cdot (t(\calC)-2)\\ 
&\geq 3(2^{|X|+2} +  |X| + 1) + (|\calC| - 1) \cdot t(\calC-2),
\end{split}
\end{equation*}
a contradiction to \Cref{obs:upper-bound-sol-interesction-eqiv-class}.
Thus, as claimed, there exists a clique $B \in \calC \setminus \{C\}$ such that 
a clone $u_B$ of $u$ in $B$ and a clone $v_B$ of $v$ in $B$ are resolved by a vertex $w\in S \setminus V(B)$. 
Also, since $d(w,u) = d(w,u_B) \neq d(w,v_B) = d(w,v)$ by \Cref{obs:clones-make-cliques-intersect-solution}, 
we conclude that $w$ resolves $u$ and $v$.
Since $R$ has size at most $k$, the claim follows.
\end{claimproof}

The safeness of \Cref{rr:identical-cliques} now follows from \Cref{clm:GtoG-VC} and \Cref{clm:G-VCtoG}.
\end{proof}

Now, observe that once \Cref{rr:identical-cliques} has been exhaustively applied to $(G,k)$, 
each equivalence class of $\sim$ contains at most $2^{|X|+2} +  |X| + 1$ cliques.
Since there are at most $2^{2^{|X|+1}}$ equivalence classes and each clique of $G-X$ has size at most $2^{|X|+1}$,
we conclude that $G$ contains at most $2^{2^{|X|+1}} \cdot (2^{|X|+2} +  |X| + 1) \cdot 2^{|X|+1} + |X|$ vertices.
\end{proof}

We now briefly explain how to adapt the proof for the distance to co-cluster.
First, recall that the {\it distance to co-cluster} of a graph $G$ is 
the minimum number of vertices of $G$ that need to be deleted 
so that the resulting graph is a {\it co-cluster graph}, {\it i.e.}, the complement of a cluster graph. 

Now, let $G$ be a graph, and let $X \subseteq V(G)$ be such that $G-X$ is a co-cluster graph.
Then, $G-X$ consists of a disjoint union of independent sets 
such that for any two distinct such independent sets $I_1,I_2$, 
every vertex of $I_1$ is adjacent to every vertex of $I_2$;
in particular, any two vertices in a same independent set are false twins in $G-X$.
Similarly to the distance to cluster proof, we first exhaustively apply \Cref{rr:twin} to \emph{false} twins in $G$.
Then, for every independent set $I$ of $G-X$, 
there are at most two vertices of $I$ with the same neighborhood in $X$,
which implies that each independent set of $G-X$ has order at most $2 \cdot 2^{|X|}$
(indeed, the number of distinct neighborhoods in $X$ is at most $2^{|X|}$).
To bound the number of independent sets in $G-X$,
we define an equivalence relation over these sets in a similar fashion to the distance to cluster.
More specifically, for every independent set $I$ of $G-X$, 
we let the \emph{signature} $\sign(I)$ of $I$ be 
the \emph{multiset} containing the neighborhoods in $X$ of each vertex of $I$,
that is, $\sign(I) = \{N(u) \cap X:u \in I\}$.
Then, two independent sets $I_1,I_2$ of $G-X$ are said to be \emph{identical},
which we denote by $I_1 \sim I_2$,  if $\sign(I_1) = \sign(I_2)$.
As for the distance to cluster, the relation $\sim$ is in fact an equivalence relation, 
with at most $2^{2^{|X|+1}}$ equivalence classes.
Given an equivalence class $\calC$ of $\sim$, 
we may similarly define $t(\calC)$ as the number of pairs of \emph{false} twins in each independent set of $\calC$.
We then bound the number of independent sets in each equivalence of $\sim$ using the following reduction rule 
(it is the analog of \Cref{rr:identical-cliques}).

\begin{reduction rule}
\label{rr:identical-independent}
If there exists an equivalence class $\calC$ of $\sim$ such that $|\calC| \geq 2^{|X|+2} + |X| +2$, then remove an independent set $I \in \calC$ from $G$ and reduce $k$ by $\max\{1,t(\calC)\}$. 
\end{reduction rule}

The proof of the safeness of \Cref{rr:identical-independent} is then analogous to that of \Cref{rr:identical-cliques};
and we can similarly argue that after \Cref{rr:identical-independent} has been exhaustively applied,
$G$ contains at most $2^{2^{|X|+1}} \cdot (2^{|X|+2}+|X|+1) \cdot 2^{|X|+1} + |X|$ vertices.
\section{Conclusion}
As the \textsc{Metric Dimension} problem is \W[2]-hard when parameterized by the solution size~\cite{HartungN13}, the next natural step is to understand its parameterized complexity under structural parameterizations.
We continued this line of research, following the steps of~\cite{BelmonteFGR17,E15,GHK22}, and more recently~\cite{BP21,LM22}.
Our most technical result is a proof that the \textsc{Metric Dimension} problem is \W[1]-hard when parameterized by the combined parameter feedback vertex set number plus pathwidth of the graph.
We thereby improved the result by Bonnet and Purohit~\cite{BP21} that states the problem is \W[1]-hard when parameterized by the pathwidth, and answered an open question in~\cite{HartungN13}.
It is easy to see that the problem admits an \FPT\ algorithm when parameterized by the larger parameter, the vertex cover number of the graph.

Although this work advances the understanding of structural parameterizations of \textsc{Metric Dimension}, it falls short of completing the picture (see Figure~\ref{fig:result-overview}).
We find it hard to extend the positive results to the parameters like distance to disjoint paths, feedback edge set, and bandwidth.
It would be interesting to find \FPT\ algorithms or prove that such algorithms are highly unlikely to exist for these parameters.
The \FPT\ algorithm parameterized by treedepth in~\cite{GHK22} relies on a meta-result.
Is it possible to get an \FPT\ algorithm whose running time is a single or double exponent in the treedepth?
It would also be interesting to investigate the parameterized complexity of the problem when the parameter is the distance to cograph.
Recall that the problem is polynomial-time solvable in cographs~\cite{ELW15}.

Bonnet and Purohit~\cite{BP21} conjectured that the problem is \W[1]-hard even when parameterized by the treewidth plus the solution size.
Towards resolving this conjecture, an interesting question would be to investigate whether the problem admits an \FPT\ algorithm when parameterized by the feedback vertex set number plus the solution size.
Note that even an \XP\ algorithm parameterized by the feedback vertex set number is not apparent.

\subsection*{Acknowledgement}
In the extended abstract of this paper~\cite{MFCSversion}, we proved that the problem does not admit a polynomial kernel when parameterized by the vertex cover number or the distance to clique unless $\NP \subseteq \coNP/poly$.
The authors would like to thank Florent Foucaud for pointing us to Gutin et al.~\cite{DBLP:journals/tcs/GutinRRW20}, which contains a slightly stronger result.

\bibliographystyle{unsrt}
\bibliography{metric-dim-bib}

\end{document}